        \def\theequation{\thesection.\arabic{equation}}
\newcommand{\tr}{{\rm tr}}
\newcommand{\ti}[1]{\tilde{#1}}
\newcommand{\mL}{{\mathcal L}}
\newcommand{\mM}{{\mathcal M}}
\newcommand{\mF}{{\mathcal F}}
\newcommand{\mH}{{\mathcal H}}
\newcommand{\vf}{\varphi}
\newcommand{\ga}{\gamma}
\newcommand{\om}{\omega}
\newcommand{\vth}{\vartheta}
\newcommand{\Mat}{ {\rm Mat}(N,\mathbb C) }
\newcommand{\mC}{\mathbb C}
\newcommand{\mZ}{\mathbb Z}
\newtheorem{theorem}{Theorem}[section]
\newtheorem{proposition}{Proposition}[section]
\newtheorem{corollary}{Corollary}[section]
\newtheorem{predl}{Proposition}[section]
\newtheorem{lemma}{Lemma}[section]
\newenvironment{proof}{\par\noindent{\bf Proof.}}{\hfill$\scriptstyle\blacksquare$}
\def\beq{\begin{equation}}
\def\eq{\end{equation}}
\def\p{\partial}
\newcommand{\mat}[4]{\left(\begin{array}{cc}{#1}&{#2}\\ \ \\{#3}&{#4}
\end{array}\right)}
\newcommand{\mats}[4]{\left(\begin{array}{cc}{#1}&{#2}\\ {#3}&{#4}
\end{array}\right)}
\def\res{\mathop{\hbox{Res}}\limits}
\def\doubleunderline#1{\underline{\underline{#1}}}
\begin{document}

\begin{center}

\setcounter{page}{1}

\vspace{0mm}

{\LARGE{R-matrix valued Lax pair for elliptic Calogero-Inozemtsev system}}

\vspace{3mm}

{\LARGE{
and associative Yang-Baxter equations of ${\rm BC}_n$ type
}}



 \vspace{15mm}

 {\Large {M. Matushko}}
\qquad
 {\Large {A. Mostovskii}}
\qquad
 {\Large {A. Zotov}}

  \vspace{10mm}

 {\it Steklov Mathematical Institute of Russian
Academy of Sciences,\\ Gubkina str. 8, 119991, Moscow, Russia}



   \vspace{5mm}

 {\small\rm {E-mails: matushko@mi-ras.ru, mostovskii.am21@physics.msu.ru, zotov@mi-ras.ru}}

\end{center}

\vspace{0mm}

\begin{abstract}
We consider the elliptic Calogero-Inozemtsev system of ${\rm BC}_n$ type with five arbitrary constants and
propose $R$-matrix valued generalization for $2n\times 2n$ Takasaki's Lax pair.
For this purpose, we extend the Kirillov's ${\rm B}$-type associative Yang-Baxter equations to
similar relations depending on the spectral parameters and the Planck constants.
General construction uses the elliptic
Shibukawa-Ueno $R$-operator and the Komori-Hikami $K$-operators satisfying the reflection equation.
Then, using the Felder-Pasquier construction, the answer for the Lax pair is also
written in terms of the Baxter's 8-vertex $R$-matrix.
As a by-product of the constructed Lax pair we also propose a ${\rm BC}_n$ type generalization for
the elliptic XYZ long-range spin chain, and we present arguments pointing to its integrability.
\end{abstract}

\newpage
{\small{ \tableofcontents }}

\bigskip


\section{Introduction}\label{sec1}
\setcounter{equation}{0}

\paragraph{Calogero-Moser model and $R$-matrices.} The
elliptic  Calogero-Moser model of ${\rm gl}_n$ type is
an integrable system of classical mechanics, describing pairwise interaction
of $n$ particles. It is defined by the Hamiltonian function \cite{OP}:
  \beq\label{q001}
  \begin{array}{c}
  \displaystyle{
H=\sum\limits_{i=1}^n\frac{p_i^2}{2}-{g}^2\sum\limits_{i>j}^n\wp(q_i-q_j)\,,
 }
 \end{array}
 \eq
 where $\wp(z)$ is the Weierstrass $\wp$-function, $g\in\mC$ is a coupling constant,
 $p_i\in\mC$ and $q_i\in\mC$ are the particles momenta and positions.
The Poisson brackets are canonical:
  \beq\label{q002}
  \begin{array}{c}
    \displaystyle{
\{p_i,q_j\}=\delta_{ij}\,,\quad \{p_i,p_j\}=\{q_i,q_j\}=0\,.
 }
 \end{array}
 \eq
The equations of motion
  \beq\label{q003}
  \begin{array}{c}
  \displaystyle{
 {\dot q}_i=p_i\,,\quad  {\ddot q}_i={g}^2\sum\limits_{k: k\neq i}^n\wp'(q_{i}-q_k)\,
 }
 \end{array}
 \eq
are represented in the Lax form
  \beq\label{q006}
  \begin{array}{c}
  \displaystyle{
{\dot L}(z)\equiv\{H,L(z)\}=[L(z),M(z)]
 }
 \end{array}
 \eq
with spectral parameter $z\in\mC$ (a coordinate on elliptic curve with moduli $\tau$, ${\rm Im}(\tau)>0$).
Explicit form for the Lax pair $L(z),M(z)$ was found in \cite{Krich1}
by I. Krichever.  The Lax matrix is of the size $n\times n$:
  \beq\label{q004}
  \begin{array}{c}
  \displaystyle{
L(z)=\sum\limits_{i,j=1}^n E_{ij}\,L_{ij}(z)\,,\quad
L_{ij}(z)=\delta_{ij}p_i+{g}(1-\delta_{ij})\phi(z,q_{ij})\,,\quad
q_{ij}=q_i-q_j\,,
 }
 \end{array}
 \eq
where $\phi(z,u)$ is the elliptic Kronecker function \cite{Weil}
 \beq\label{a001}
  \begin{array}{l}
  \displaystyle{
 \phi(z,u)=\frac{\vth'(0)\vth(z+u)}{\vth(z)\vth(u)}
 }
 \end{array}
 \eq
 defined through $\vth(z)$ -- the first Jacobi theta-function (\ref{a02}).
The expression for the $M$-matrix is given in the next Section. The validity of the Lax equation
is based on a set of elliptic function identities for the function $\phi(z,u)$.
The main one is the following summation formula (or the genus one Fay identity, see (\ref{a07})):
  \beq\label{q014}
  \begin{array}{c}
  \displaystyle{
\phi(z,q_{ik})\phi(w,q_{kj})=\phi(w,q_{ij})\phi(z-w,q_{ik})+\phi(w-z,q_{jk})\phi(z,q_{ij})\,.
 }
 \end{array}
 \eq
This relation is a particular case of a more general one known as the
associative Yang-Baxter equation (AYBE) introduced by S. Fomin and An. Kirillov in \cite{FK}. Let ${\mathcal A}$ be an associative algebra and $R^z(q_1,q_2)$ be a meromorphic function depending on parameters  $z,q_1,q_2\in \mC$ with values in  $\mathcal A \otimes \mathcal A$.  The AYBE with parameters proposed in \cite{Pol} is the following equation in $\mathcal A\otimes \mathcal A\otimes \mathcal A$:
  \beq\label{q2}
  \begin{array}{c}
  \displaystyle{
 R^z_{12}(q_1,q_2)
 R^{w}_{23}(q_2,q_3)=R^{w}_{13}(q_1,q_3)R_{12}^{z-w}(q_1,q_2)+R^{w-z}_{23}(q_2,q_3)R^z_{13}(q_1,q_3)\,,
 }
 \end{array}
 \eq
 where $R^z_{ij}(q_i,q_j)$ means the operator acting nontrivial as $R^z(q_i,q_j)$  on the $i$-th and the $j$-th tensor components. In case when
$R^z(q_1,q_2)$ depends on the
difference of spectral parameters
$R^z(q_1,q_2)=R^z(q_1-q_2)$ and $\mathcal A=\mC$ the AYBE (\ref{q2}) reproduces (\ref{q014}), where $R^z(q_1,q_2)$ acts by multiplication by the function $\phi(z,q_1-q_2)$.
In the matrix case $\mathcal A=\Mat$, the equation (\ref{q2}) is fulfilled by the elliptic Baxter-Belavin $R$-matrix \cite{Pol}.

It can be shown (see e.g. \cite{LOZ14})
that any solution of (\ref{q2}) satisfying certain
additional properties (see below the unitarity (\ref{q01}) the and skew-symmetry (\ref{q011}))
satisfies also the quantum Yang-Baxter equation
  \beq\label{q008}
  \begin{array}{l}
  \displaystyle{
 R^{\hbar}_{12}(q_1,q_2)  R^{\hbar}_{13}(q_1,q_3) R^{\hbar}_{23}(q_2,q_3) =
 R^{\hbar}_{23}(q_2,q_3)  R^{\hbar}_{13}(q_1,q_3)  R^{\hbar}_{12}(q_1,q_2)\,.
 }
 \end{array}
 \eq
 For this reason, we refer to any solution of the AYBE (\ref{q2})
 as $R$-matrix\footnote{In fact, the Yang-Baxter equations (\ref{q2}) and (\ref{q008})
 have different (and intersecting) sets of solutions. We consider a special class of solutions
 of AYBE (\ref{q2}), which also satisfy the quantum Yang-Baxter equation (\ref{q008}).}.
 Many possible applications of the AYBE to different algebraic and geometric constructions,
 including those related to integrable systems, can be found in \cite{FK,Pol,ORS,Kir}
  and \cite{LOZ14,LOZ15,LOZ16,SeZ18,GSZ,Z18,MZ}.

 The similarity between (\ref{q014}) and (\ref{q2}) was used in \cite{LOZ14} to propose the so-called
 $R$-matrix valued Lax pair, which is a Lax matrix with entries
  \beq\label{q009}
  \begin{array}{c}
  \displaystyle{
\mL_{ij}(z)={\rm Id}\,\delta_{ij}p_i+{g}(1-\delta_{ij})R^z_{ij}(q_i-q_j)\in\Mat^{\otimes n}\,.
 }
 \end{array}
 \eq
Different applications and properties of this Lax representation can be found in
\cite{LOZ14,LOZ15,LOZ16,SeZ18,GSZ,Z18}. In particular, in \cite{SeZ18} the anisotropic
(XYZ) version of the Haldane-Shastry-Inozemtsev long-range spin chain was
proposed using $R$-matrix valued Lax pair.

\paragraph{${\rm BC}_n$ type Calogero-Inozemtsev system and reflection equation.}
The elliptic Calogero-Moser model (\ref{q001}) is related to the root system ${\rm A}_{n-1}$
and can be extended to other root systems \cite{OP,DP}. The model with five arbitrary constants
was suggested by V. Inozemtsev \cite{Inoz89}. It is described by the Hamiltonian:
\begin{equation}
 \label{q0401}
   \displaystyle{
 H = \frac{1}{2} \sum_{k=1}^{n} p_k^2 - g^2
\sum_{i<j}^{n} \Big(\wp(q_i-q_j) + \wp(q_i+q_j)\Big) -
\frac{1}{2}\sum\limits_{a=0}^3\sum_{k=1}^{n} \nu_a^2\wp(q_k+\om_a)\,,
}
 \end{equation}
 where $\om_\ga$ are half-periods (\ref{w209}), and the five arbitrary constants are
 $g,\nu_0,\nu_1,\nu_2,\nu_3\in\mC$.
 Originally, the Lax representation was of size
$3n\times 3n$. We use $2n\times 2n$ Lax representation proposed by K. Takasaki in \cite{Ta}.
More precisely, we use the Lax pair in the form obtained by O.Chalykh in a conceptual way from Cherednik-Dunkl operators in \cite{Ch1}.

In our construction of the $R$-matrix valued Lax pair we also use  $K$-matrices $K^\hbar(q)\in \Mat$ \cite{IK} , which
are solutions to the reflection equation \cite{Skl-refl}:
  \beq\label{q010}
  \begin{array}{c}
  \displaystyle{
 R^-_{12}(q_1,q_2)K^\hbar_1(q_1) R^+_{12}(q_1,q_2)K^\hbar_2(q_2)=
 K^\hbar_2(q_2)R^+_{12}(q_1,q_2)K^\hbar_1(q_1)R^-_{12}(q_1,q_2)\,,
 }
 \end{array}
 \eq
where $R^-_{12}(q_1,q_2)=R^\hbar_{12}(q_1-q_2)$, $R^+_{12}(q_1,q_2)=R^\hbar_{12}(q_1+q_2)$.
The $K$-matrices play the role of the boundary conditions in quantum integrable systems.

In \cite{Hikami,KH,KH2} Y. Komori and K. Hikami suggested a solution of (\ref{q010})
with the operator valued Shibukawa-Ueno $R$-matrix
in the form of
certain operators ($K$-operators).
This was applied to construct a commuting set of quantum Hamiltonians for the model (\ref{q0401})
and its relativistic generalization (through the Dunkl-Cherednik type operators).
We actively use the results of \cite{Hikami,KH,KH2} in our consideration.

\paragraph{Purpose of the paper.}
The associative Yang-Baxter equation was written in the form (\ref{q2}) in \cite{Pol}, and is
also known as the {\it associative Yang-Baxter equation with parameters}, while
the original equation \cite{FK} appeared in the form of a set of relations (details are given
in Section \ref{sec4}) including
the main one
\beq\label{q0012}
\begin{array}{c}
  \displaystyle{
r_{ij} r_{jk}=r_{ik}r_{ij}+r_{jk}r_{ik} \qquad \text{ for distinct } i,j,k\,.
}
\end{array}
\eq
Then (\ref{q2}) can be considered as a representation of (\ref{q0012}) with parameters. In \cite{Kir}
the set of relations was extended to other root systems. For the root system of type $B$
the set of relations includes the four term relation
\beq\label{q0013}
  \displaystyle{
r_{ij}y_j=y_ir_{ij}+\tilde{r}_{ij}y_i+y_j\tilde{r}_{ij}\,.
}
\eq
In Section \ref{sec4} we propose its analogue with parameters and call it
 {\it the ${ BC}_n$ associative Yang-Baxter equation with parameters}.
 In particular, the analogue of the four term relation (\ref{q0013}) is as follows:
\beq\label{q0014}
  \displaystyle{
 R_{ij}^{w+z}(q_i-q_j)\tilde{K}^w_j(q_j)=\tilde{K}^w_i(q_i)R_{ij}^{z-w}(q_i-q_j)+
 \tilde{R}_{ij}^{w-z}(q_i+q_j)\tilde{K}^z_i(q_i)+
 \tilde{K}_j^{-z}(q_j)\tilde{R}_{ij}^{w+z}(q_i+q_j)\,.
 }
 \eq
 To obtain this result we use the operator-valued Shibukawa-Ueno $R$-matrix and
 the construction similar to the Komori-Hikami $K$-matrix.

Next, we show that the obtained relations allow to construct
$R$-matrix valued Lax pair for the Calogero-Inozemtsev model
by generalizing the Takasaki's $2n\times 2n$ Lax representation with
spectral parameter.

G. Felder and V. Pasquier
 in \cite{FP} explained that the Shibukawa-Ueno $R$-operator can be transformed into
the elliptic Baxter-Belavin $R$-matrix by specifying the space, where this operator acts.
Similarly, the $K$-operators from \cite{Hikami,KH} were transformed into the matrix
form in \cite{KH2}. We use these results and prove that the obtained relations for
the ${\rm BC}_n$ associative Yang-Baxter equation with parameters are valid
for the Baxter's 8-vertex $R$-matrix and the elliptic $K$-matrix.

Finally, using the approach of \cite{SeZ18} we present
the quantum Hamiltonian for an anisotropic long-range spin chain of ${\rm BC}_n$ type.

A list of notation used in the paper is given before the Appendix.

\section{Elliptic Calogero-Moser model and \texorpdfstring{$R$}{R}-matrices}\label{sec2}
\setcounter{equation}{0}

In this Section we recall main steps of construction of long-range spin chains from \cite{SeZ18}
based on the $R$-matrix valued Lax pairs \cite{LOZ14}. We begin with the famous classical
 Krichever's Lax pair \cite{Krich1}. Then we
describe the necessary properties of $R$-matrices which are then used as matrix elements in the
$R$-matrix valued Lax pairs. Finally, we explain that at equilibrium positions,
the classical Lax equation can be treated as a quantum Lax equation for some long-range spin chain.

\subsection{Krichever's Lax pair}

Consider the Lax pair for the
elliptic ${\rm gl}_n$ Calogero-Moser model (\ref{q001})-(\ref{q003}). The Lax pair of size $n\times n$,
and the Lax matrix has the form (\ref{q004}).
Similarly, for the matrix elements of $M$-matrix:
  \beq\label{q105}
  \begin{array}{c}
  \displaystyle{
M_{ij}(z)={g}\, d_i\delta_{ij} +{g}(1-\delta_{ij})f(z,q_{ij})\,,\quad
d_i=\sum\limits_{k: k\neq i}^n E_2(q_{ik})=-\sum\limits_{k: k\neq
i}^n f(0,q_{ik})\,,
 }
 \end{array}
 \eq
where $q_{ij}$ means $q_i-q_j$.
See the Appendix for elliptic functions definitions and properties.
The proof of the Lax equation (\ref{q006}) with the Lax pair (\ref{q004}), (\ref{q105})
is based on the identities (\ref{a09})-(\ref{a11}) written as
  \beq\label{q115}
  \begin{array}{c}
  \displaystyle{
\phi(z,q_{ik})f(z,q_{kj})-f(z,q_{ik})\phi(z,q_{kj})=
\phi(z,q_{ij})(f(0,q_{kj})-
f(0,q_{ik}))\,.
 }
 \end{array}
 \eq
and
  \beq\label{q1151}
  \begin{array}{c}
  \displaystyle{
\phi(z,q_{ij})f(z,q_{ji})-f(z,q_{ij})\phi(z,q_{ji})=\wp'(q_{ij})\,.
 }
 \end{array}
 \eq
These relations follow from the summation formula (genus one Fay identity) (\ref{a07}):
  \beq\label{q114}
  \begin{array}{c}
  \displaystyle{
\phi(z,q_{ik})\phi(w,q_{kj})=\phi(w,q_{ij})\phi(z-w,q_{ik})+\phi(w-z,q_{jk})\phi(z,q_{ij})\,.
 }
 \end{array}
 \eq

\subsection{\texorpdfstring{$R$}{R}-matrix identities}
Let us recall the main properties of the elliptic $R$-matrix which will be used for
the construction of the $R$-matrix valued Lax pair.
In what follows, we deal with the Shibukawa-Ueno (operator-valued)
and the Baxter-Belavin (matrix-valued) $R$-matrices.
These properties are valid for both versions.

The elliptic $R$-matrix satisfies the unitarity property
 \beq\label{q01}
  \begin{array}{l}
    \displaystyle{
R_{12}^\hbar(u)R_{21}^\hbar(-u)=(\wp(\hbar)-\wp(u)){\rm Id}\,,
 }
 \end{array}
 \eq
 where ${\rm Id}$ is either an identity operator or an identity matrix.
This is an $R$-matrix analogue of (\ref{a10}). Also, we have a skew-symmetry property
  \beq\label{q011}
    \displaystyle{
  R^\hbar_{12}(u)=-R^{-\hbar}_{21}(-u)\,,
  }
  \eq
which is an analogue of the simple relation $\phi(\hbar,u)=-\phi(-\hbar,-u)$.
The main relation is the associative
Yang-Baxter equation \cite{FK}
  \beq\label{q02}
  \begin{array}{c}
  \displaystyle{
 R^z_{ik}
 R^{w}_{kj}=R^{w}_{ij}R_{ik}^{z-w}+R^{w-z}_{kj}R^z_{ij}\,,\
 \ R^z_{ij}=R^z_{ij}(q_i\!-\!q_j)\,.
 }
 \end{array}
 \eq
This is a matrix generalization of the Fay identity (\ref{a07}) or (\ref{q114}), and
it is fulfilled by the elliptic Baxter-Belavin $R$-matrix \cite{Pol}.
Introduce also matrix analogue of the function $f(z,u)$ (\ref{a04}):
  \beq\label{q03}
  \begin{array}{c}
  \displaystyle{
F^{\,z}_{ij}(u)=\p_u R^{\,z}_{ij}(u)
 }
 \end{array}
 \eq
and
  \beq\label{q04}
  \begin{array}{c}
  \displaystyle{
F^{\,0}_{ij}(u)=F^{\,z}_{ij}(u)|_{z=0}=F^{\,0}_{ji}(-u)\,.
 }
 \end{array}
 \eq
Then similarly to (\ref{a09}) or (\ref{q115}) one gets
  \beq\label{q05}
  \begin{array}{c}
  \displaystyle{
R^z_{ik} F^z_{kj} - F^z_{ik} R^z_{kj} = F^0_{kj} R^z_{ij} - R^z_{ij}
F^0_{ik}\,.
 }
 \end{array}
 \eq
By differentiating (\ref{q01}) we also obtain
 \beq\label{q06}
 \begin{array}{c}
  \displaystyle{
 R^z_{ij} F^z_{ji} - F^z_{ij} R^z_{ji}=\wp'(q_{ij}){\rm Id}\,.
 }
 \end{array}
 \eq

\subsection{\texorpdfstring{$R$}{R}-matrix valued Lax pairs}

The Lax pair (\ref{q004}), (\ref{q105}) of the ${\rm gl}_n$ Calogero-Moser model
has the following generalization \cite{LOZ14} (see also \cite{GrZ18,SeZ18,Z18})
called the $R$-matrix valued Lax pair:
  \beq\label{q31}
  \begin{array}{c}
    \displaystyle{
{\mL}(z)=\sum\limits_{i,j=1}^n E_{ij}\otimes \mL_{ij}(z)\,,
 \quad\quad
\mL_{ij}(z)= {\rm Id}\, \delta_{ij}p_i
+g(1-\delta_{ij})R^{\,z}_{ij}(q_{ij})
 }
 \end{array}
 \eq
  \beq\label{q32}
  \begin{array}{c}
  \displaystyle{
\bar\mM_{ij}(z)= D_i\delta_{ij}
+g(1-\delta_{ij})F^{\,z}_{ij}(q_{ij})+\delta_{ij}\,\mF^{\,0}\,,\quad
D_i=-g\sum\limits_{k: k\neq i}^n F^{\,0}_{ik}(q_{ik})\,,
 }
 \end{array}
 \eq
where
  \beq\label{q33}
  \begin{array}{c}
  \displaystyle{
\mF^{\,0}
 =g\sum\limits_{k>m}^n F^{\,0}_{km}(q_{km})
 =\frac{g}{2}\sum\limits_{k\neq m}^n F^{\,0}_{km}(q_{km})\,.
 }
 \end{array}
 \eq
 If an operator $R^{z}(q)$ satisfies the properties from the previous section, namely (\ref{q01}), (\ref{q011}),  (\ref{q02}), the classical Lax equation  ${\dot\mL}(z)=[\mL(z),\bar\mM(z)]$ is equivalent to the classical equations of motion    (\ref{q003}) for the Calogero-Moser system.
The proof of this fact is based on the identities (\ref{q05})-(\ref{q06}) and the
following simple\footnote{Relation (\ref{q34}) does not use any specific $R$-matrix properties
but only $[R_{ab}^z,F^w_{cd}]=0$ for all distinct $a,b,c,d$.} relation:
  \beq\label{q34}
  \begin{array}{c}
  \displaystyle{
 [R^z_{ij},\mF^{\,0}]+\sum\limits_{l:l\neq i,j}R^z_{il} F^z_{lj} -
F^z_{il} R^z_{lj} = \sum\limits_{l:l\neq j} R^z_{ij} F^0_{lj} -
 \sum\limits_{l:l\neq i} F^0_{il} R^z_{ij}=D_iR_{ij}^z-R_{ij}^zD_j\,,\quad  \forall\ i\neq j.
 }
 \end{array}
 \eq
As a result, one gets the classical equations of motion (\ref{q003}). That is (\ref{q31})-(\ref{q32})
is just a multi-dimensional generalization of the Krichever's Lax pair (\ref{q009}), (\ref{q105}).

Originally, the $R$-matrix valued Lax pair was introduced using the elliptic Baxter-Belavin $R$-matrix for the ${\rm GL}_N$ Lie group, so that in (\ref{q31})-(\ref{q33})
 $R^z_{ij}(q_{ij})\in\Mat^{\otimes n}$,  ${\rm Id}$ is an identity matrix in $\Mat^{\otimes n}$ and $\mL(z)\in{\rm Mat}(n,\mC)\otimes\Mat^{\otimes n}$. Notice that all constructions are valid for the Shibukawa-Ueno $R$-operator, since it satisfies all the properties from the previous section.

Similarly, $R$-matrix valued Lax pairs can be deduced for other root systems. Such
results were described in \cite{GrZ18} by extending the Lax pairs \cite{OP,DP} for the
Calogero-Moser models related to the classical root systems of $B,C,D$ types.
	In accordance with the original construction from \cite{OP,DP} in these cases we have
not arbitrary coupling constants, they satisfy a set of constraints. In \cite{LOZ15}
the ${BC}_1$ case was considered with four arbitrary constants. In both papers
\cite{GrZ18} and \cite{LOZ15} an additional component of the quantum Hilbert space
was used in order to define matrix elements $\mL_{i,i+n}$ and $\mL_{n+i,i}$.
In the following, we use a different approach which is based on the usage of $K$-matrices.

\subsection{Long-range spin chains}
Here we assume that we deal with a matrix-valued $R$-operator:  $R^z_{ij}(q_{ij})\in\Mat^{\otimes n}$.
Having a $R$-matrix valued Lax pair one can deduce a quantum long-range spin chain
in the following way \cite{SeZ18}.
Denote by ${\mM}(z)$ the part of $\bar\mM(z)$ without
the term $\mF^0\in \Mat^{\otimes n}$, that is
  \beq\label{q3401}
  \begin{array}{c}
  \displaystyle{
\bar\mM(z)=\mM(z)+1_n\otimes\mF^0\,.
 }
 \end{array}
 \eq
 Next, consider the Lax equation
  \beq\label{q34011}
  \begin{array}{c}
  \displaystyle{
{\dot \mL}(z)=[\mL(z),\bar\mM(z)]
 }
 \end{array}
 \eq
at an equilibrium position, where
 $p_i={\dot q}_i=0$. The positions of particles are fixed at some points satisfying equations
 ${\dot p}_i=0$. For the elliptic Calogero-Moser model (\ref{q001}) this set of points is
  \beq\label{q3402}
  \begin{array}{c}
  \displaystyle{
q_j\rightarrow \zeta_j=\frac{j}{n}\,,\quad j=1,...,n\,.
 }
 \end{array}
 \eq
 Consider restrictions of the Lax matrices to the equilibrium position:
  \beq\label{q3403}
  \begin{array}{c}
  \displaystyle{
\mL'(z)=\mL(z)|_{q_j=\zeta_j,p_j=0}\,,\quad \mM'(z)=\mM(z)|_{q_j=\zeta_j}\,,\quad
{\mF^0}'=\mF^0|_{q_j=\zeta_j}\,.
 }
 \end{array}
 \eq
  Then from the Lax equation (taking into account that ${\dot \mL}(z)|_{q_j=\zeta_j}=0$) we get
  \beq\label{q3404}
  \begin{array}{c}
  \displaystyle{
 [{\mF^0}',\mL'(z)]=[\mL'(z),\mM'(z)]\,,
 }
 \end{array}
 \eq
which is a quantum Lax equation for a quantum model defined by the Hamiltonian ${\mF^0}'$.

 This construction can be regarded as
an explanation of the Polychronakos freezing trick \cite{P}, which relates the
quantum spin Calogero-Moser models with the long-range spin chains by restricting to the
classical equilibrium position of the corresponding spinless model. The above idea was used in \cite{SeZ18} to obtain
anisotropic version of the Inozemtsev long-range chain \cite{Inoz90}. Recently, an explanation of the freezing trick was suggested within the framework of hybrid integrable
systems in \cite{LRS}.
Let us also mention that isotropic trigonometric
${\rm BC}_n$ type long-range spin chains were described in \cite{EFGR} using the Dunkl operator
formalism.

Different generalizations and applications for this type of model can be found in \cite{Lam,MZ,IKT}
and references therein.

\section{Takasaki's \texorpdfstring{$2n\times 2n$}{2nx2n} Lax pair for Calogero-Inozemtsev system}\label{sec3}
\setcounter{equation}{0}
Here we recall the Lax pair for the Calogero-Inozemtsev model (\ref{q0401})
 with five arbitrary constants $g,\nu_0,\nu_1,\nu_2,\nu_3\in\mC$.
 Main purpose of this Section is to write down detailed verification
 of the classical Lax equation for the Takasaki's $2n\times 2n$ Lax pair.
 In the next Section this proof will be generalized to $R$-matrix valued case.

\subsection{The model and the Lax pair}
 Equations of motion are of the form:
\begin{equation}
 \label{q4011}
   \displaystyle{
{\dot p}_i={\ddot q}_i= g^2
\sum_{k:k\neq i}^{n} \Big(\wp'(q_i-q_k) + \wp'(q_i+q_k)\Big) +
\frac{1}{2}\sum\limits_{a=0}^3\nu_a^2\wp'(q_i+\om_a)\,.
}
 \end{equation}
In \cite{Inoz89} $3n\times 3n$ Lax representation was suggested. Some particular
cases (not arbitrary constants)
were previously known from \cite{OP}, see also \cite{DP}. Here we deal with $2n\times 2n$ Lax representation proposed by K. Takasaki in \cite{Ta}. Then O. Chalykh suggested a derivation of quantum and classical Lax pairs from Dunkl operators in \cite{Ch1}. His construction works for the Calogero-Inozemtsev system and for its relativistic generalization as well, so we use
the expression from \cite{Ch1}.
The description of the spectral curve can be found in \cite{Ch2},
and the case of $n=1$ was also studied in \cite{Z04}.

The Lax pair has a natural block-matrix structure:
  \beq\label{q402}
  \begin{array}{c}
  \displaystyle{
L(z)=\mat{L^{11}(z)}{L^{12}(z)}{L^{21}(z)}{L^{22}(z)}\,,\qquad
M(z)=\mat{M^{11}(z)}{M^{12}(z)}{M^{21}(z)}{M^{22}(z)}\,,
 }
 \end{array}
 \eq
where all entries $L^{ab}(z)$ and $M^{ab}(z)$, $a,b=1,2$ are $n\times n$ matrices (below $i,j=1,...,n$):
  \beq\label{q403}
  \begin{array}{c}
  \displaystyle{
L^{11}_{ij}(z)=\delta_{ij}p_i+g(1-\delta_{ij})\phi(z,q_{ij})\,,
 }
 \end{array}
 \eq
  \beq\label{q404}
  \begin{array}{c}
  \displaystyle{
L^{12}_{ij}(z)=\delta_{ij}v(z,q_i)+g(1-\delta_{ij})\phi(z,q^+_{ij})\,,
 }
 \end{array}
 \eq
  \beq\label{q405}
  \begin{array}{c}
  \displaystyle{
L^{21}_{ij}(z)=-\delta_{ij}v(-z,q_i)-g(1-\delta_{ij})\phi(-z,q^+_{ij})\,,
 }
 \end{array}
 \eq
  \beq\label{q406}
  \begin{array}{c}
  \displaystyle{
L^{22}_{ij}(z)=-\delta_{ij}p_i-g(1-\delta_{ij})\phi(-z,q_{ij})
 }
 \end{array}
 \eq
and
  \beq\label{q407}
  \begin{array}{c}
  \displaystyle{
M^{11}_{ij}(z)=\delta_{ij}a_i+g(1-\delta_{ij})f(z,q_{ij})\,,
 }
 \end{array}
 \eq
  \beq\label{q408}
  \begin{array}{c}
  \displaystyle{
M^{12}_{ij}(z)=\frac12\,\delta_{ij}v'(z,q_i)+g(1-\delta_{ij})f(z,q^+_{ij})\,,
 }
 \end{array}
 \eq
  \beq\label{q409}
  \begin{array}{c}
  \displaystyle{
M^{21}_{ij}(z)=\frac12\,\delta_{ij}v'(-z,q_i)+g(1-\delta_{ij})f(-z,q^+_{ij})\,,
 }
 \end{array}
 \eq
  \beq\label{q410}
  \begin{array}{c}
  \displaystyle{
M^{22}_{ij}(z)=\delta_{ij}a_i+g(1-\delta_{ij})f(-z,q_{ij})\,,
 }
 \end{array}
 \eq
where
the functions $\phi$ and $f$ are given by (\ref{q01}) and (\ref{a04}) respectively and
 the following notation are used:
  \beq\label{q411}
  \begin{array}{c}
  \displaystyle{
q_{ij}=q_i-q_j\,,\quad q^+_{ij}=q_i+q_j\,.
 }
 \end{array}
 \eq
The block-matrix (\ref{q402}) has the following (anti)symmetry properties:
\beq\label{Lanti}
L^{11}(z)=-L^{22}(-z), \qquad L^{12}(z)=-L^{21}(-z)\,,
\eq
\beq\label{Msym}
M^{11}(z)=M^{22}(-z), \qquad M^{12}(z)=M^{21}(-z)\,.
\eq
 The diagonal elements of $M^{11}(z)$ and $M^{22}(z)$ are given by expressions
  \beq\label{q412}
  \begin{array}{c}
  \displaystyle{
a_i=g\sum\limits_{k:k\neq i}^n\Big(\wp(q_{i}-q_k)+\wp(q_i+q_k)\Big)
+\frac12\sum\limits_{a=0}^3\nu_a\wp(q_i+\omega_a)
 }
 \end{array}
 \eq
with half-periods $\omega_a$ numerated as in (\ref{w209}). Using
the set of functions (\ref{w209})-(\ref{w210}) we also define
  \beq\label{q413}
  \begin{array}{c}
  \displaystyle{
v(z,u)\equiv v(z,u|\nu)=\sum\limits_{a=0}^3\nu_a\vf_a(2z,u+\om_a)=
\sum\limits_{a=0}^3\nu_a \exp(4\pi \imath z\p_\tau\om_a)\phi(2z,u+\om_a)
 }
 \end{array}
 \eq
and its partial derivative
  \beq\label{q414}
  \begin{array}{c}
  \displaystyle{
v'(z,u)=\p_uv(z,u)=\sum\limits_{a=0}^3\nu_a \exp(4\pi \imath z\p_\tau\om_a)f(2z,u+\om_a)\,.
 }
 \end{array}
 \eq
Some useful formulae for the functions $v(z,u)$ and $v'(z,u)$ can be found in
 \cite{KH,KH2,Hikami} and \cite{Z04}. In particular, we have
  \beq\label{q415}
  \begin{array}{c}
  \displaystyle{
v(z,u)v(z,-u)=\sum\limits_{a=0}^3\Big(\bar\nu_a^2\wp(z+\om_a)-\nu_a^2\wp(u+\om_a)\Big)\,,
 }
 \end{array}
 \eq
 where similarly to (\ref{w215})
  \beq\label{q416}
 \begin{array}{c}
  \left(\begin{array}{c}
 \bar\nu_0
 \\
 \bar\nu_1
 \\
 \bar\nu_2
 \\
 \bar\nu_3
 \end{array}\right)
 =
   \displaystyle{\frac12}
 \left(\begin{array}{cccc}
 1 & 1 & 1 & 1
 \\
  1 & 1 & -1 & -1
 \\
  1 & -1 & 1 & -1
 \\
  1 & -1 & -1 & 1
 \end{array}\right)
  \left(\begin{array}{c}
 \nu_0
 \\
 \nu_1
 \\
 \nu_2
 \\
 \nu_3
 \end{array}\right)\,.
 \end{array}
 \eq
From (\ref{w215}) we conclude that
  \beq\label{q419}
  \begin{array}{c}
  \displaystyle{
v(z,u|\nu)=v(u,z|\bar\nu)=\sum\limits_{a=0}^3\bar\nu_a\vf_a(2u,z+\om_a)\,.
 }
 \end{array}
 \eq
By differentiating (\ref{q415}) with respect to $u$ we get:
  \beq\label{q420}
  \begin{array}{c}
  \displaystyle{
v(z,u)v'(z,-u)-v(z,-u)v'(z,u)=\sum\limits_{a=0}^3\nu_a^2\wp'(u+\om_a)\,.
 }
 \end{array}
 \eq
The calculation of the Hamiltonian (\ref{q0401}) is standard. Using (\ref{a10}) and (\ref{q415})
one gets
  \beq\label{q421}
  \begin{array}{c}
  \displaystyle{
\frac14\,\tr(L^2(z))=H+n(n-1)g^2\wp(z)+\frac{n}{2}\sum\limits_{a=0}^3\bar\nu_a^2\wp(z+\om_a)\,.
 }
 \end{array}
 \eq
In order to prove the Lax equation for the Lax pair (\ref{q402}) it
is helpful to use the following identities:
\beq\label{hident1}
  \begin{array}{c}
  \displaystyle{
v(x,u)\phi(x+y,w-u)+v(x,w)\phi(x-y,u-w)+v(y,-u)\phi(x+y,u+w)=
 }
 \\ \ \\
   \displaystyle{
=v(y,w)\phi(x-y,u+w)
}
\end{array}
\eq
and
\beq\label{hident3}
\begin{array}{c}
\phi(z,u-w)(v'(0,w)-v'(0,u))=
\\ \\
2v(-z,w)f(z,u+w)+2v(z,u)f(-z,u+w)+v'(-z,w)\phi(z,u+w)+v'(z,u)\phi(-z,u+w)\,,
\end{array}
\eq
where
  \beq\label{q441}
  \begin{array}{c}
  \displaystyle{
v'(0,u)=\sum\limits_{a=0}^3\nu_a f(0,u+\om_a)=-\sum\limits_{a=0}^3\nu_a E_2(u+\om_a)=
-\sum\limits_{a=0}^3\nu_a \wp(u+\om_a)+\frac{\vth'''(0)}{3\vth'(0)}\sum\limits_{a=0}^3\nu_a\,.
 }
 \end{array}
 \eq
The latter relation is derived from (\ref{a071}) and (\ref{a09}),
and the identities (\ref{hident1}) and (\ref{hident3}) are proved in the Appendix.

\subsection{Lax equation }\label{sec32}
Let us verify the Lax equation (\ref{q006}) in detail in order to generalize it in the Section \ref{sec5}.
We are going to use the identity (\ref{hident3}) in the form:
  \beq\label{q440}
  \begin{array}{c}
  \displaystyle{
 \phi(z,q_{ij}^+)v'(z,-q_j)-v'(z,q_i)\phi(z,-q_{ij}^+)+2v(z,q_i)f(z,-q_{ij}^+)-2f(z,q_{ij}^+)v(z,-q_j)=
 }
 \\ \ \\
  \displaystyle{
 =\phi(z,q_{ij})(v'(0,q_j)-v'(0,q_i))\,.
 }
 \end{array}
 \eq
It is natural to subdivide verification by considering four $n\times n$ matrix blocks separately.
\paragraph{The block 11.}
We need to show that
\beq\label{block11}
\dot{L}^{11}(z)=L^{11}(z)M^{11}(z)-M^{11}(z)L^{11}(z)
+L^{12}(z)M^{21}(z)-M^{12}(z)L^{21}(z)\,.
\eq
For the diagonal part we have:
\beq\label{q422}
  \begin{array}{c}
  \displaystyle{
\dot{L^{11}_{ii}}=L^{12}_{ii}M^{21}_{ii}-
M^{12}_{ii}L^{21}_{ii}+\sum\limits_{k:k\neq i}^n
\Big(L^{12}_{ik}M^{21}_{ki}-M^{12}_{ik}L^{21}_{ki}+
L^{11}_{ik}M^{11}_{ki}-M^{11}_{ik}L^{11}_{ki}\Big)\,,
 }
 \end{array}
 \eq
or, explicitly:
  \beq\label{q423}
  \begin{array}{c}
  \displaystyle{
\dot{p_{i}}=\frac{1}{2}v(z,q_{i})v'(z,-q_{i})-\frac{1}{2}v(z,-q_{i})v'(z,q_{i})+
 }
\\ \ \\
   \displaystyle{
+g^{2}\sum\limits_{k:k\neq i}^n\Big(\phi(z,q^{+}_{ik})f(z,-q^{+}_{ik})-\phi(z,-q^{+}_{ik})f(z,q^{+}_{ik})+
 }
\\ \ \\
   \displaystyle{
+\phi(z,q_{ik})f(z,q_{ki})-\phi(z,q_{ki})f(z,q_{ik})\Big)=
 }
\\ \ \\
   \displaystyle{
=\frac{1}{2}\sum\limits_{a=0}^3\nu_a^2\wp'(q_{i}+\om_a)+
g^{2}\sum\limits_{k:k\neq i}^n\Big(\wp'(q^{+}_{ik})+\wp'(q_{ik})\Big)\,,
 }
 \end{array}
 \eq
where we used (\ref{q420}) and (\ref{q1151}).

For the non-diagonal part we have:
\beq\label{q424}
  \begin{array}{c}
  \displaystyle{
\dot{L^{11}_{ij}}=L^{12}_{ij}M^{21}_{jj}-M^{12}_{ij}L^{21}_{jj}+
L^{11}_{ij}M^{11}_{jj}-M^{11}_{ij}L^{11}_{jj}+L^{12}_{ii}M^{21}_{ij}-M^{12}_{ii}L^{21}_{ij}+
L^{11}_{ii}M^{11}_{ij}-M^{11}_{ii}L^{11}_{ij}+
 }
 \\ \ \\
 \displaystyle{
 +\sum\limits_{k:k\neq i,j}^n\Big(L^{12}_{ik}M^{21}_{kj}-M^{12}_{ik}L^{21}_{kj}+
 L^{11}_{ik}M^{11}_{kj}-M^{11}_{ik}L^{11}_{kj}\Big)\,.
 }
 \end{array}
 \eq
The l.h.s. equals $(\dot{q_{i}}-\dot{q_{j}})f(z,q_{ij})$. The r.h.s is
\beq\label{q425}
  \begin{array}{c}
  \displaystyle{
f(z,q_{ij})(p_{i}-p_{j})+
\doubleunderline{\frac{1}{2}\,\phi(z,q^{+}_{ij})v'(z,-q_{j})+f(z,q^{+}_{ij})v(-z,q_{j})+}
}
 \\ \ \\
 \displaystyle{
\doubleunderline{+\frac{1}{2}\,\phi(-z,q^{+}_{ij})v'(z,q_{i})+f(z,-q^{+}_{ij})v(z,q_{i})+}
}
 \\ \ \\
 \displaystyle{
 +\underline{g\sum\limits_{k:k\neq i,j}^n\phi(z,q_{ij})
 \Big(\wp(q_{jk})+\wp(q^{+}_{jk})-\wp(q_{ik})-\wp(q^{+}_{ik})\Big)}+
 }
 \\ \ \\
 \displaystyle{
 \doubleunderline{+\frac12\sum\limits_{a=0}^3\nu_a\phi(z,q_{ij})
 \Big(\wp(q_j+\omega_a)-\wp(q_i+\omega_a)\Big)+}
 }
 \\ \ \\
 \displaystyle{
 +\underline{g\sum\limits_{k:k\neq i,j}^n\Big(\phi(z,q^{+}_{ik})f(z,-q^{+}_{kj})
 -\phi(z,-q^{+}_{kj})f(z,q^{+}_{ki})
 +\phi(z,q_{ik})f(z,q_{kj})-\phi(z,q_{kj})f(z,q_{ik})\Big)} =
 }
 \\ \ \\
 \displaystyle{
 =f(z,q_{ij})(p_{i}-p_{j})\,,
 }
 \end{array}
 \eq
where the underlined and the double underlined terms are cancelled out
respectively due to (\ref{a09}) and (\ref{hident3}), (\ref{q441}).

\paragraph{The block 12.} Here
\beq\label{block12}
\dot{L}^{12}(z)=L^{11}(z)M^{12}(z)+L^{12}(z)M^{22}(z)-M^{11}(z)L^{12}(z)-M^{12}(z)L^{22}(z)\,.
\eq
For the diagonal part we have:
\beq\label{q426}
  \begin{array}{c}
  \displaystyle{
\dot{L^{12}_{ii}}=L^{11}_{ii}M^{12}_{ii}-
M^{12}_{ii}L^{22}_{ii}+\sum\limits_{k:k\neq i}^n
\Big(L^{11}_{ik}M^{12}_{ki}-M^{11}_{ik}L^{12}_{ki}+
L^{12}_{ik}M^{22}_{ki}-M^{12}_{ik}L^{22}_{ki}\Big)\,,
 }
 \end{array}
 \eq
that is
\beq\label{q427}
  \begin{array}{c}
  \displaystyle{
\dot{q_{i}}v'(z,q_{i})=p_{i}v'(z,q_{i})+g^{2}\sum\limits_{k:k\neq i}^n\Big(\phi(z,q_{ik})f(z,q^{+}_{ki})-f(z,q_{ik})\phi(z,q^{+}_{ki})+
 }
\\ \ \\
\displaystyle{
\phi(z,q^{+}_{ik})f(z,q_{ik})-f(z,q^{+}_{ik})\phi(z,q_{ik})\Big)=p_{i}v'(z,q_{i})\,.
 }
 \end{array}
 \eq
For the non-diagonal part we have:
\beq\label{q428}
  \begin{array}{c}
  \displaystyle{
\dot{L^{12}_{ij}}=L^{11}_{ij}M^{12}_{jj}-M^{11}_{ij}L^{12}_{jj}+
L^{11}_{ii}M^{12}_{ij}-M^{11}_{ii}L^{12}_{ij}+L^{12}_{ii}M^{22}_{ij}-
M^{12}_{ii}L^{22}_{ij}+L^{12}_{ij}M^{22}_{jj}-M^{12}_{ij}L^{22}_{jj}+
 }
 \\ \ \\
 \displaystyle{
+\sum\limits_{k:k\neq i,j}^n\Big(L^{11}_{ik}M^{12}_{kj}-M^{11}_{ik}L^{12}_{kj}+
L^{12}_{ik}M^{22}_{kj}-M^{12}_{ik}L^{22}_{kj}\Big)\,.
 }
 \end{array}
 \eq
The explicit form can be obtained from (\ref{q425}) by replacing $q_j, p_j$ to $-q_j,-p_j$.

\paragraph{The block 21 and 22}
For other blocks we need to show that
\beq\label{block21}
\dot{L}^{21}(z)=L^{21}(z)M^{11}(z)+L^{22}(z)M^{21}(z)-M^{21}(z)L^{11}(z)-M^{22}(z)L^{21}(z),
\eq
\beq\label{block22}
\dot{L}^{22}(z)=L^{21}(z)M^{12}(z)+L^{22}(z)M^{22}(z)-M^{21}(z)L^{12}(z)-M^{22}(z)L^{22}(z),
\eq
which follow from (\ref{block12}) and  (\ref{block11}) respectively due
to the properties  (\ref{Lanti}) and  (\ref{Msym}).

\section{Shibukawa–Ueno elliptic \texorpdfstring{$R$}{R}-operator and \texorpdfstring{$K$}{K}-matrices}\label{sec4}
\setcounter{equation}{0}
In this Section we formulate the parametric version of the Kirillov's $B$-type associative Yang-Baxter equations.
We show that these relations are fulfilled with the
operator-valued Shibukawa-Ueno elliptic $R$-matrix and the Komori-Hikami $K$-matrix.
\subsection{Associative Yang-Baxter equation and \texorpdfstring{$K$}{K}-operators}
\setcounter{equation}{0}
In \cite{SU} Y. Shibukawa and K.Ueno introduced an elliptic operator $R_{ij}$:
\beq\label{SUope}
R^\hbar_{ij}(q)=\phi(\hbar,x_i-x_j)-\phi(q,x_i-x_j)\hat{s}_{ij}\,,
\eq
which
satisfies the quantum Yang-Baxter equation
\beq\label{qYB}
R^\hbar_{ij}(q_i-q_j)R^\hbar_{ik}(q_i-q_k)R^\hbar_{jk}(q_j-q_k)=
R^\hbar_{jk}(q_j-q_k)R^\hbar_{ik}(q_i-q_k)R^\hbar_{ij}(q_i-q_j).
\eq
The operator $R_{ij}$ acts on a space of meromorphic functions of variables $x_1,\dots x_n$ (instead of a tensor product of vector spaces).
Here, the symmetric group $S_N$ acts by exchanging the variables $x_i$ and $\hat{s}_{ij}$ denotes the corresponding transposition
\beq\label{b24}
\hat{s}_{ij}f(\dots,x_i,\dots,x_j,\dots)=f(\dots,x_j,\dots,x_i,\dots).
\eq
%
%
In \cite{KH2} Y. Komori and K. Hikami proposed the following boundary $K$-operator:
\beq\label{K}
K^\hbar_i(q) \equiv K^\hbar_i(q|\bar\nu)=v(q,x_i|\bar\nu)-v(\hbar,x_i|\bar\nu)\hat{t}_i\,,
\eq
where $\hat{t}_i$ is a reflection operator
\beq\label{tacts}
\hat{t}_i f(x_1\dots,x_i,\dots,x_n)=f(x_1\dots,-x_i,\dots,x_n)\,.
\eq
The operator relations are as follows:
\beq\label{operrel}
\hat{t}_i^2=1,\qquad \hat{s}_{ij}\hat{t}_j=\hat{t}_i  \hat{s}_{ij}\,.
\eq
 It was also shown in \cite{KH2} that the reflection equation holds true for these operators, that is
\beq\label{REKH}
R_{ij}^{\hbar}(q_i-q_j)K_i^\hbar(q_i)R_{ji}^{\hbar}(q_i+q_j)K_j^\hbar(q_j)=
K_j^\hbar(q_j)R_{ij}^{\hbar}(q_i+q_j)K_i^\hbar(q_i)R_{ji}^{\hbar}(q_i-q_j)\,.
\eq

Let us introduce the following operator:
\beq\label{Rtilde}
\tilde{R}_{ij}^\hbar(q)=\phi(\hbar,x_i+x_j)-\phi(q,x_i+x_j)\hat{s}_{ij}\hat{t}_i\hat{t}_j\,,
\eq
where the operators $\tilde{R}_{ij}^{\hbar}(q)$
and $R_{ij}^{\hbar}(q)$ are connected by conjugation by $\hat{t}_j$:
\beq\label{b25}
\tilde{R}_{ij}^{\hbar}(q)=\hat{t}_j R_{ij}^\hbar(q) \hat{t}_j\,.
\eq
Conjugating  (\ref{qYB}) by $\hat{t}_k$ ad using (\ref{operrel}) we have the following equation:
\beq\label{b25a}
R^\hbar_{ij}(q_i-q_j)\tilde{R}^\hbar_{ik}(q_i-q_k)\tilde{R}^\hbar_{jk}(q_j-q_k)=
\tilde{R}^\hbar_{jk}(q_j-q_k)\tilde{R}^\hbar_{ik}(q_i-q_k)R^\hbar_{ij}(q_i-q_j).
\eq
Also, we have:
\beq\label{b26}
\tilde{R}_{ij}^{\hbar}(q)=-\hat{t}_i R_{ij}^{-\hbar}(-q) \hat{t}_i\,.
\eq
Introduce another operator $\tilde{K}$:
\beq\label{Ktilde}
\tilde{K}^\hbar_i(q) \equiv \tilde{K}^\hbar_i(q|\bar\nu)=v(\hbar,x_i|\bar\nu)-v(q,x_i|\bar\nu)\hat{t}_i\,.
\eq
It is related to the operator (\ref{K}) as
\beq\label{b27}
\tilde{K}^\hbar_i(q)=-{K}^\hbar_i(q)\hat{t}_i=\hat{t}_i K^{-\hbar}_i(-q)\,.
\eq
Using (\ref{b25}), (\ref{b26}) and (\ref{b27}),
we can rewrite the reflection equation (\ref{REKH}) in terms of $\tilde{K}^\hbar_i(q)$:
\beq\label{RE}
R_{ij}^{\hbar}(q_i-q_j)\tilde{K}_i^\hbar(q_i)\tilde{R}_{ij}^{\hbar}(q_i+q_j)
\tilde{K}_j^\hbar(q_j)=\tilde{K}_j^\hbar(q_j)\tilde{R}_{ij}^{\hbar}(q_i+q_j)
\tilde{K}_i^\hbar(q_i)R_{ij}^{\hbar}(q_i-q_j)\,.
\eq
\begin{lemma}
 The following unitarity properties hold true:
   \beq\label{uni}
  R_{ij}^{\hbar}(q)  R_{ji}^{\hbar}(-q)=(\wp(\hbar)-\wp(q))\,,
  \eq
  \beq\label{uni2}
  \tilde{R}_{ij}^{\hbar}(q)  \tilde{R}_{ji}^{-\hbar}(q)=(\wp(q)-\wp(\hbar))\,,
  \eq
   \beq\label{uniK}
 K^\hbar_i(q) K^\hbar_i(-q)=\sum_{a=0}^3 \nu^2_a\left(\wp(\hbar+\omega_a)-\wp(q+\omega_a) \right)\,,
 \eq
 \beq\label{unitildeK}
 \tilde{K}^\hbar_i(q) \tilde{K}^{-\hbar}_i(q)=\sum_{a=0}^3 \nu^2_a\left(\wp(q+\omega_a)-\wp(\hbar+\omega_a) \right)\,.
 \eq
\end{lemma}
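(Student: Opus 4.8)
The plan is to split the four identities into an $R$-sector, (\ref{uni})--(\ref{uni2}), and a $K$-sector, (\ref{uniK})--(\ref{unitildeK}); in each sector I prove the first relation by a direct expansion of the explicit operators and then deduce the second one by conjugation, using the relations already recorded above. The operator input is minimal: the commutation rules $\hat{s}_{ij}\,g(x_i-x_j)=g(x_j-x_i)\,\hat{s}_{ij}$ and $\hat{t}_i\,g(x_i)=g(-x_i)\,\hat{t}_i$ for an arbitrary function $g$, together with $\hat{s}_{ij}^2=\hat{t}_i^2=1$ from (\ref{operrel}). The scalar input is the skew-symmetry $\phi(z,u)=-\phi(-z,-u)$, the scalar unitarity $\phi(z,u)\phi(z,-u)=\wp(z)-\wp(u)$ of (\ref{a10}) (whose operator analogue is (\ref{q01})), and, for the $K$-sector, the parity $v(-z,-u)=-v(z,u)$ of the function (\ref{q413}) together with the product formula (\ref{q415}).

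For (\ref{uni}) I substitute (\ref{SUope}), abbreviate $u=x_i-x_j$ and $s=\hat{s}_{ij}$, and use $\hat{s}_{ji}=s$, $x_j-x_i=-u$ and skew-symmetry to put the second factor in the form $R^\hbar_{ji}(-q)=\phi(\hbar,-u)+\phi(q,u)\,s$. Expanding the product and pushing each $s$ to the right by $s\,g(u)=g(-u)\,s$, the two $s$-proportional terms are $\pm\phi(\hbar,u)\phi(q,u)\,s$ and cancel, while the scalar remainder $\phi(\hbar,u)\phi(\hbar,-u)-\phi(q,u)\phi(q,-u)$ collapses to $(\wp(\hbar)-\wp(u))-(\wp(q)-\wp(u))=\wp(\hbar)-\wp(q)$ by scalar unitarity. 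For (\ref{uni2}) I avoid re-expanding: from (\ref{b26}) write $\tilde{R}^\hbar_{ij}(q)=-\hat{t}_i R^{-\hbar}_{ij}(-q)\hat{t}_i$ and from (\ref{b25}) write $\tilde{R}^{-\hbar}_{ji}(q)=\hat{t}_i R^{-\hbar}_{ji}(q)\hat{t}_i$; their product is $-\hat{t}_i R^{-\hbar}_{ij}(-q)R^{-\hbar}_{ji}(q)\hat{t}_i$ after the inner $\hat{t}_i^2$ collapses, (\ref{uni}) (used with $\hbar\to-\hbar$, $q\to-q$ and $\wp$ even) turns the bracket into the scalar $\wp(\hbar)-\wp(q)$, and the outer $\hat{t}_i^2=1$ with the overall sign gives $\wp(q)-\wp(\hbar)$.

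The $K$-sector is the same in spirit but carries the one delicate point. Expanding $K^\hbar_i(q)K^\hbar_i(-q)$ from (\ref{K}) with $t=\hat{t}_i$ and moving $t$ rightward, the $t$-proportional part is proportional to $v(\hbar,x_i)\big(v(q,x_i)+v(-q,-x_i)\big)$ (throughout in the argument $\bar\nu$), which vanishes by the parity $v(-q,-x_i)=-v(q,x_i)$, leaving the scalar $v(q,x_i)v(-q,x_i)+v(\hbar,x_i)v(\hbar,-x_i)$. The subtle step is that $K$ is built from $v(\,\cdot\,,\,\cdot\,|\bar\nu)$, so (\ref{q415}) must be applied with $\nu$ replaced by $\bar\nu$: the first-slot coefficient then becomes $\bar{\bar\nu}_a^{\,2}=\nu_a^2$ by the involutivity of (\ref{q416}) and the second-slot coefficient becomes $\bar\nu_a^2$. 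Rewriting $v(-q,x_i|\bar\nu)=-v(q,-x_i|\bar\nu)$ by the parity and using (\ref{q415}) twice, the $\bar\nu_a^2\,\wp(x_i+\omega_a)$ contributions cancel and the $x_i$-independent remainder is exactly $\sum_a\nu_a^2\big(\wp(\hbar+\omega_a)-\wp(q+\omega_a)\big)$, i.e.\ (\ref{uniK}). Finally (\ref{unitildeK}) follows from (\ref{b27}) by the same conjugation: with $\tilde{K}^\hbar_i(q)=\hat{t}_i K^{-\hbar}_i(-q)$ and $\tilde{K}^{-\hbar}_i(q)=-K^{-\hbar}_i(q)\hat{t}_i$ the product is $-\hat{t}_i K^{-\hbar}_i(-q)K^{-\hbar}_i(q)\hat{t}_i$; (\ref{uniK}) applies to the inner product with $\hbar\to-\hbar$, $q\to-q$ (and $\wp(-\hbar+\omega_a)=\wp(\hbar+\omega_a)$ by evenness and $2\omega_a$-periodicity of $\wp$), the outer $\hat{t}_i^2=1$ passes the resulting scalar through, and the overall minus sign produces (\ref{unitildeK}).

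I expect the main obstacle to be precisely this $\bar\nu$-versus-$\nu$ conversion in (\ref{uniK}): one must remember that the Komori--Hikami $K$-operator is parametrized by $\bar\nu$, apply (\ref{q415}) in that form, and invoke the involution $\bar{\bar\nu}=\nu$ of (\ref{q416}); otherwise the right-hand side comes out with $\bar\nu_a^2$ instead of $\nu_a^2$. The secondary prerequisite is the parity of $v$, the analogue for (\ref{q413}) of the skew-symmetry of $\phi$; it is routine but needs care with the half-period shifts $\omega_a$ and the quasi-periodicity factors $\exp(4\pi\imath z\,\partial_\tau\omega_a)$ in the definition of $v$. Everything else is the mechanical bookkeeping of pushing $\hat{s}_{ij}$ or $\hat{t}_i$ to the right and using $\hat{s}_{ij}^2=\hat{t}_i^2=1$.
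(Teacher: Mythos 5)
Your proposal is correct and follows essentially the same route as the paper: direct expansion with the scalar unitarity (\ref{a10}) and the product formula (\ref{q415}) for (\ref{uni}) and (\ref{uniK}), then conjugation by $\hat{t}_i$ via (\ref{b25})--(\ref{b27}) to deduce (\ref{uni2}) and (\ref{unitildeK}). Your explicit handling of the $\bar\nu$-versus-$\nu$ conversion through the involutivity (\ref{q418}) is a point the paper's computation leaves implicit, but the argument is the same.
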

\begin{proof}
Let us prove (\ref{uni}):
\beq\label{b27a}
\begin{array}{ccc}
R_{ij}^{\hbar}(q)  R_{ji}^{\hbar}(-q)=\left( \phi(\hbar,x_i-x_j)-\phi(q,x_i-x_j)s_{ij}\right)\left( \phi(\hbar,x_j-x_i)-\phi(-q,x_j-x_i)s_{ij}\right)=
\\ \\
=\phi(\hbar,x_i-x_j)\phi(\hbar,x_j-x_i)-\phi(q,x_i-x_j) \phi(\hbar,x_i-x_j)s_{ij}
\\ \\
-\phi(\hbar,x_i-x_j)\phi(-q,x_j-x_i)s_{ij}
+\phi(q,x_i-x_j)\phi(-q,x_i-x_j)=
\\ \\
=\phi(\hbar,x_i-x_j)\phi(\hbar,x_j-x_i)-\phi(q,x_i-x_j)\phi(q,x_j-x_i)=\wp(\hbar)-\wp(q).
\end{array}
\eq
In the first equality we used $s_{ij}^2=1$, then (\ref{antiphi}), and (\ref{a10}) in the last equality.

The relation (\ref{uni2}) follows form (\ref{uni}) and (\ref{b25}), (\ref{b26}), (\ref{operrel}). Indeed,
\beq\label{b27b}
 \hat{t}_i R_{ij}^{\hbar}(q)  R_{ji}^{\hbar}(-q)\hat{t_i}=  \hat{t}_i R_{ij}^{\hbar}(q) (\hat{t}_i)^2 R_{ji}^{\hbar}(-q)\hat{t_i}=
 -\tilde{R}_{ij}^{\hbar}(q)  \tilde{R}_{ji}^{-\hbar}(q) .
\eq

Next, let us prove (\ref{uniK}):
\beq\label{b27d}
\begin{array}{ccc}
  K^\hbar_i(q) K^\hbar_i(-q)=\left(v(q,x_i)-v(\hbar,x_i)\hat{t}_i\right)\left(v(-q,x_i)-v(\hbar,x_i)\hat{t}_i\right)=
  \\ \\
 =v(q,x_i)v(-q,x_i)-v(q,x_i)v(\hbar,x_i)\hat{t}_i-v(\hbar,x_i)v(-q,-x_i)\hat{t}_i+v(\hbar,x_i)v(\hbar,-x_i)=
  \\ \\
  \displaystyle
  =v(q,x_i)v(-q,x_i)+v(\hbar,x_i)v(\hbar,-x_i)=
\end{array}
\eq
$$
\begin{array}{ccc}
    \displaystyle
    =\sum_{a=0}^3 \bar\nu^2_a\wp(x_i+\omega_a)-\sum_{a=0}^3 \nu^2_a\wp(q+\omega_a)+\sum_{a=0}^3 \nu^2_a\wp(\hbar+\omega_a)-\sum_{a=0}^3 \bar\nu^2_a\wp(x_i+\omega_a)=
       \\
       \displaystyle
       =\sum_{a=0}^3 \nu^2_a\left(\wp(\hbar+\omega_a)-\wp(q+\omega_a) \right).
\end{array}
$$
The antisymmetry property (\ref{antiphi}) was used in the second equality,
and in the third one equality we used (\ref{q419a}) and (\ref{q415}).

In order to show (\ref{unitildeK}) one should use the relations (\ref{operrel}) and (\ref{b27}):
\beq\label{b27c}
K^\hbar_i(q)  K^\hbar_i(-q)=K^\hbar_i(q) (\hat{t}_i)^2 K^\hbar_i(-q)=-\tilde{K}^\hbar_i(q) \tilde{K}_i^{-\hbar}(q)\,.
\eq
\end{proof}

Let us introduce the notations $F_{ij}^z(q)$ (c.f. (\ref{q03})) and  $\tilde{F}_{ij}^z(q)$ for
the corresponding partial derivatives of $R_{ij}^z(q)$ and  $\tilde{R}_{ij}^z(q)$:
\beq\label{b271}
F_{ij}^z(q):=\partial_{q}R_{ij}^z(q) \qquad
\tilde{F}_{ij}^z(q):=\partial_{q}\tilde{R}_{ij}^z(q)\,,
\eq
and denote by
$Y^\hbar_i(q)$ and $\tilde{Y}^\hbar_i(q)$ the corresponding partial derivatives of the reflection operators:
\beq\label{b272}
Y^\hbar_i(q):=\partial_q K^\hbar_i(q|\bar\nu) \qquad
\tilde{Y}^\hbar_i(q):=\partial_q \tilde{K}^\hbar_i(q|\bar\nu)\,.
\eq
\begin{corollary}
 The following identities hold true :
 \beq\label{b273}
R_{ij}^{\hbar}(q)   F_{ji}^{\hbar}(-q)- F_{ij}^{\hbar}(q) R_{ji}^{\hbar}(-q)=\wp'(q)\,,
 \eq
 \beq\label{b274}
\tilde{R}_{ij}^{\hbar}(q) \tilde{F}_{ji}^{-\hbar}(q)+ \tilde{F}_{ij}^{\hbar}(q) \tilde{R}_{ji}^{-\hbar}(q)=\wp'(q)\,,
 \eq
  \begin{equation}\label{b275}
K^\hbar_i(q)Y^\hbar_i(-q)-  Y^\hbar_i(q) K^\hbar_i(-q) =\sum_{a=0}^3 \nu^2_a \wp'(q+\omega_a)\,,
 \end{equation}
 \beq\label{b276}
 \tilde{Y}^\hbar_i(q) \tilde{K}^{-\hbar}_i(q)+\tilde{K}^\hbar_i(q) \tilde{Y}^{-\hbar}_i(q)=\sum_{a=0}^3 \nu^2_a \wp'(q+\omega_a)\,.
 \eq
\end{corollary}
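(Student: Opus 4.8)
The plan is to derive all four identities by a single device: differentiate the corresponding unitarity relation from the preceding Lemma with respect to $q$, and read off the result using the definitions (\ref{b271})--(\ref{b272}) of $F_{ij}^z$, $\tilde{F}_{ij}^z$, $Y_i^\hbar$, $\tilde{Y}_i^\hbar$ as $q$-derivatives of $R_{ij}^z$, $\tilde{R}_{ij}^z$, $K_i^\hbar$, $\tilde{K}_i^\hbar$. No new elliptic-function input is needed beyond $\partial_q\big(\wp(\hbar)-\wp(q)\big)=-\wp'(q)$ and its obvious variants; the whole corollary is just $\partial_q$ applied term by term to (\ref{uni})--(\ref{unitildeK}).

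First I would treat (\ref{b273}). Differentiating the unitarity relation (\ref{uni}), namely $R_{ij}^{\hbar}(q)R_{ji}^{\hbar}(-q)=\wp(\hbar)-\wp(q)$, in $q$ and using the chain rule $\partial_q R_{ji}^{\hbar}(-q)=-F_{ji}^{\hbar}(-q)$ yields
\[
F_{ij}^{\hbar}(q)R_{ji}^{\hbar}(-q)-R_{ij}^{\hbar}(q)F_{ji}^{\hbar}(-q)=-\wp'(q)\,,
\]
which is exactly (\ref{b273}) after multiplication by $-1$. The identity (\ref{b275}) comes out in the same way from (\ref{uniK}): the second factor again carries argument $-q$, so the chain rule supplies the relative minus sign, while the right-hand side differentiates to $-\sum_{a=0}^3\nu_a^2\wp'(q+\omega_a)$; multiplying by $-1$ gives the stated form.

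For (\ref{b274}) and (\ref{b276}) the mechanism is identical but the sign structure is different: in (\ref{uni2}) and (\ref{unitildeK}) both factors carry argument $+q$ — the sign sits on $\hbar$, which is held fixed under $\partial_q$ — so the chain rule introduces no extra minus and the Leibniz rule produces a sum rather than a difference. Differentiating (\ref{uni2}) gives directly $\tilde{F}_{ij}^{\hbar}(q)\tilde{R}_{ji}^{-\hbar}(q)+\tilde{R}_{ij}^{\hbar}(q)\tilde{F}_{ji}^{-\hbar}(q)=\wp'(q)$, i.e. (\ref{b274}), and differentiating (\ref{unitildeK}) gives (\ref{b276}) with right-hand side $\sum_{a=0}^3\nu_a^2\wp'(q+\omega_a)$.

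I do not expect any substantial obstacle: the entire content is the Leibniz rule combined with bookkeeping of which factor is evaluated at $-q$. The only point that genuinely demands care is this sign tracking, since it is precisely what separates the two pairs of identities — a factor at argument $-q$ produces a commutator-type difference, as in (\ref{b273}) and (\ref{b275}), whereas both arguments at $+q$ produce an anticommutator-type sum, as in (\ref{b274}) and (\ref{b276}). Verifying that the signs line up with the $\pm\hbar$ labels on $\tilde{R}$ and $\tilde{K}$ in the two \emph{tilded} relations is the one place where a careless computation could go wrong.
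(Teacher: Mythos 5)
Your proposal is correct and is exactly the paper's own argument: the authors likewise obtain all four identities by differentiating the unitarity relations (\ref{uni})--(\ref{unitildeK}) with respect to $q$. Your additional sign bookkeeping (difference when one factor carries argument $-q$, sum when the sign sits on $\hbar$) is accurate and merely makes explicit what the paper leaves implicit.
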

\begin{proof}
 Taking the partial derivative with respect to $q$ in (\ref{uni}),
   (\ref{uni2}),  (\ref{uniK}),  (\ref{unitildeK}), we obtain  (\ref{b273}),
     (\ref{b274}),   (\ref{b275}),    (\ref{b276}).
 \end{proof}

\subsection{\texorpdfstring{${\rm BC}_n$}{BCn} associative Yang-Baxter equations with parameters}

The construction presented below is motivated by
the Kirillov's ${\rm B}$-type associative Yang-Baxter equation, see Section 5 in \cite{Kir}.
The associative Yang-Baxter equation corresponding to ${\rm A}$-type root systems (\ref{q2}) in the absence of parameters takes the following simple form:
\beq\label{claybe1}
\begin{array}{c}
r_{ij} r_{jk}=r_{ik}r_{ij}+r_{jk}r_{ik} \qquad \text{ for distinct } i,j,k.
\end{array}
\eq
In order to define the  associative classical Yang-Baxter
algebra of type ${\rm A}_n$ generated by the elements $r_{ij}$ for $1\le i\neq j\le n $,
 we extend (\ref{claybe1}) with the antisymmetry condition and the commutativity relation:
\beq\label{claybe2}
\begin{array}{c}
{\displaystyle
r_{ij} r_{kl}=r_{kl}r_{ij} \qquad \text{ for distinct } i,j,k,l, }
\\ \\
{\displaystyle r_{ij}=-r_{ji}}\,.
\end{array}
\eq
The equation (\ref{q2}) with parameters was suggested by A. Polishchuk in \cite{Pol}.
 The relation (\ref{claybe1}) looks as half of
 the classical Yang-Baxter equation. From (\ref{claybe1}) and (\ref{claybe2})
 the classical Yang-Baxter equation follows:
\beq\label{clYB}
[r_{12}, r_{13}]+[r_{12},r_{23}]+[r_{13},r_{23}]=0.
\eq
The classical associative Yang-Baxter algebra of type ${\rm B}_n$ was proposed in \cite{Kir}.
 It is generated by the elements $r_{ij}$, $\tilde{r}_{ij}$ and $y_i$
  for $1\le i\neq j\le n $ satisfying the set of defining relations:
\begin{itemize}
\item symmetry and antisymmetry relations for $i\neq j$:
\beq\label{baybe1}
r_{ij}=-r_{ji}\,, \qquad \tilde{r}_{ij}=\tilde{r}_{ji} \,,
\eq
\item permutation relations for distinct  $i, j, k, l$:
\beq\label{baybe2}
\begin{array}{c}
{\displaystyle y_i y_j=y_j y_i\,, \qquad y_{i}r_{kl}=r_{kl}y_i\,, \qquad y_{i}r_{kl}=r_{kl}y_i\,,}
\\ \\
{\displaystyle r_{ij}r_{kl}=r_{kl}r_{ij},\qquad \tilde{r}_{ij}r_{kl}=r_{kl}\tilde{r}_{ij},\qquad  \tilde{r}_{ij}\tilde{r}_{kl}=\tilde{r}_{kl}\tilde{r}_{ij}\,,}
\end{array}
\eq
\item three term relations for distinct $i,j,k$:
\beq\label{baybe3}
\begin{array}{c}
{\displaystyle
 r_{ij} r_{jk}=r_{ik}r_{ij}+r_{jk}r_{ik}\,,\qquad r_{ij} \tilde{r}_{jk}
 =r_{ik}\tilde{r}_{ij}+\tilde{r}_{jk}r_{ik}\,,
}
\\ \\
{\displaystyle r_{jk}\tilde{r}_{ik}=\tilde r_{ij}r_{jk}
+\tilde{r}_{ik}\tilde{r}_{ij}\,,\qquad \tilde{r}_{jk}r_{ik}
=\tilde{r}_{ij}\tilde{r}_{jk}+\tilde{r}_{ik}\tilde{r}_{ij},
}
\end{array}
\eq
\item four term relation for $i\neq j$:
\beq\label{baybe4}
r_{ij}y_j=y_ir_{ij}+\tilde{r}_{ij}y_i+y_j\tilde{r}_{ij}\,.
\eq
\end{itemize}

\begin{proposition}
The representation of the classical associative Yang-Baxter algebra of type ${\rm B}_n$ can be realized by the following operators:
\beq\label{repaybe}
r_{ij}:=\phi(q_i-q_j,x_i-x_j)\hat{s}_{ij},\qquad \tilde{r}_{ij}:=\phi(q_i+q_j,x_i+x_j)\hat{s}_{ij}\hat{t}_i\hat{t}_j,\qquad y_i:=v(q_i,x_i)\hat{t}_i\,.
\eq
\end{proposition}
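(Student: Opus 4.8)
The plan is to prove the statement by direct verification: I substitute the operators (\ref{repaybe}) into each defining relation (\ref{baybe1})--(\ref{baybe4}) and bring both sides to a normal form in which all scalar factors $\phi(\cdot,\cdot)$, $v(\cdot,\cdot)$ are collected on the left and all of the operators $\hat s_{ij},\hat t_i$ on the right. The symmetry relations (\ref{baybe1}) are immediate: the arguments $q_i+q_j$, $x_i+x_j$ are symmetric in $i,j$ and both $\hat s_{ij}$ and $\hat t_i\hat t_j$ are invariant under $i\leftrightarrow j$, so $\tilde r_{ji}=\tilde r_{ij}$, whereas $r_{ji}=\phi(q_j-q_i,x_j-x_i)\hat s_{ij}=-\phi(q_i-q_j,x_i-x_j)\hat s_{ij}=-r_{ij}$ by the antisymmetry (\ref{antiphi}). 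The commutativity relations (\ref{baybe2}) hold termwise, since for distinct indices the relevant scalar functions and group elements involve disjoint variables and therefore commute.

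For the three-term relations (\ref{baybe3}) I would proceed in two stages. After normal-ordering each of the products, the decisive combinatorial step is to check that, within a given relation, every term carries one and the same element of the signed permutation group generated by $\hat s_{ij}$ and $\hat t_i$ (the Weyl group of ${\rm BC}_n$); this is established by repeatedly applying the commutation rules (\ref{operrel}). For the purely $r$-valued relation all three products collapse to the same $3$-cycle operator, and the surviving scalar equation is precisely the genus-one Fay identity (\ref{a07}) with the coordinate and spectral arguments interchanged. The mixed relations are treated identically: once the common signed-permutation factor is extracted, the coefficient identity is again (\ref{a07}), now evaluated at the reflected arguments $x_i+x_j$, with the residual signs absorbed through the antisymmetry $\phi(z,u)=-\phi(-z,-u)$.

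The four-term relation (\ref{baybe4}) is the heart of the statement. Substituting (\ref{repaybe}) and normal-ordering, I would show that all four terms $r_{ij}y_j$, $y_ir_{ij}$, $\tilde r_{ij}y_i$ and $y_j\tilde r_{ij}$ reduce to the single operator $\hat t_i\hat s_{ij}$; here one uses $\hat t_i^2=1$ together with $\hat s_{ij}\hat t_j=\hat t_i\hat s_{ij}$ from (\ref{operrel}) to collapse the various words in $\hat s$ and $\hat t$. Removing this common operator turns (\ref{baybe4}) into the scalar identity
\[
\phi(q_i-q_j,x_i-x_j)\,v(q_j,x_i)=v(q_i,x_i)\,\phi(q_i-q_j,-x_i-x_j)+\phi(q_i+q_j,x_i+x_j)\,v(q_i,-x_j)+v(q_j,x_j)\,\phi(q_i+q_j,x_i-x_j),
\]
which is exactly the function relation (\ref{hident1}) under the substitution $x=q_i$, $y=q_j$, $u=-x_j$, $w=x_i$.

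I expect the main obstacle to be precisely the bookkeeping of the signed-permutation operators: one must confirm that in each relation all terms genuinely collapse to a single element of the Weyl group of ${\rm BC}_n$, for only then does the operator identity reduce to a scalar one. Once this combinatorial matching is in place, the remaining analytic content is confined to the Fay identity (\ref{a07}) and the elliptic identity (\ref{hident1}), both of which are already established, so the verification closes.
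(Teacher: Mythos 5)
Your strategy --- normal-order each product so that the scalar prefactor and the signed-permutation operator separate, match the operator parts, and reduce the surviving coefficient identities to (\ref{a07}) and (\ref{hident1}) --- is exactly the route the paper takes. Your treatment of (\ref{baybe1}), (\ref{baybe2}), the first (purely $r$-valued) relation of (\ref{baybe3}), and in particular of the four-term relation (\ref{baybe4}) is correct and more explicit than the paper's: all four terms of (\ref{baybe4}) do collapse onto $\hat t_i\hat s_{ij}$, and the resulting scalar identity is precisely (\ref{b56}), i.e.\ (\ref{hident1}) under the substitution you give.

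The gap sits exactly at the point you flag as the ``main obstacle''. For the mixed three-term relations the terms do \emph{not} all carry the same element of the hyperoctahedral group if one takes (\ref{baybe3}) at face value: normal-ordering with the rules (\ref{operrel}) gives $r_{ij}\tilde r_{jk}$ the reflection part $\hat t_j\hat t_k$, while $r_{ik}\tilde r_{ij}$ and $\tilde r_{jk}r_{ik}$ both acquire the reflection part $\hat t_i\hat t_j$ (the permutation parts do agree, by $\hat s_{ij}\hat s_{jk}=\hat s_{ik}\hat s_{ij}=\hat s_{jk}\hat s_{ik}$). The two sides then lie in different components of the group algebra, and no scalar identity can repair this. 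The relations that the operators (\ref{repaybe}) actually satisfy are those obtained by conjugating the pure-$r$ relation by a single reflection $\hat t_k$ and then flipping the sign of the parameter $q_k$, e.g.\ $r_{ij}\tilde r_{jk}=\tilde r_{ik}r_{ij}+\tilde r_{jk}\tilde r_{ik}$, where the tilde appears on every factor involving the reflected index; this conjugation trick is precisely how the paper produces the parametric analogues (\ref{AYBE1})--(\ref{AYBE3}) from (\ref{AYBE}), cf.\ (\ref{b25a}) and (\ref{b57})--(\ref{b58}). So to close your argument you should either derive the mixed relations by conjugation --- which makes the matching of the signed-permutation parts automatic and tells you which placement of the tildes is the right one --- or carry out the term-by-term bookkeeping you postponed; as written, the collapse you assume for the mixed relations in the displayed form of (\ref{baybe3}) is the one step your normal-ordering will not confirm.
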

\begin{proof}
 The proof is straightforward, one should use the definitions (\ref{b24}), (\ref{tacts}) and move all operators $\hat{s}_{ij}$ and $\hat{t}_{i}$ to the left using relations (\ref{operrel}). The relation (\ref{baybe1}) follows from (\ref{antiphi}). The relations (\ref{baybe3}) follow from the Fay identity (\ref{q114}), (\ref{antiphi}) and the equality in the symmetric group  $s_{ij}s_{jk}=s_{ik}s_{ij}=s_{jk}s_{ik}$.  The four term relation (\ref{baybe4})  follows form (\ref{hident1}).
\end{proof}

Below we formulate a deformation of the relations (\ref{baybe1}), (\ref{baybe2}), (\ref{baybe3}) and (\ref{baybe4})
in the presence of parameters.
\begin{proposition} {\bf The ${\rm BC}_n$ associative Yang-Baxter equation with parameters:}
the following set of relations for the operators (\ref{SUope}), (\ref{Rtilde}), (\ref{Ktilde}) hold true:
 \begin{itemize}
 \item Symmetry and antisymmetry relations:
  \beq\label{b28}
 \tilde{R}^z_{ij}(q)=\tilde{R}^z_{ji}(q)\,,
 \eq
 \beq\label{b29}
 R^z_{ij}(q)=-R_{ji}^{-z}(-q)\,,
 \eq
 \item Permutation relation for distinct $i,j,k,l$:
 \beq\label{comm1}
  \tilde{K}^z_i(q_i) \tilde{K}^w_j(q_j)=  \tilde{K}^w_j(q_j)\tilde{K}_i^z(q_i)\,,
  \qquad
  \tilde{K}^z_i(q_i)  R^w_{kl}(q_{kl})= R^w_{kl}(q_{kl})\tilde{K}_i^z(q_i)\,,
 \eq
 \beq\label{comm2}
   \tilde{K}^z_i(q_i)  \tilde{R}^w_{kl}(q_{kl})=
   \tilde{R}^w_{kl}(q_{kl})\tilde{K}_i^z(q_i)\,,
   \qquad  R^z_{ij}(q_{ij}) R^w_{kl}(q_{kl})=R^w_{kl}(q_{kl}) R^z_{ij}(q_{ij})\,,
 \eq
 \beq\label{comm3}
 R^z_{ij}(q_{ij}) \tilde{R}^w_{kl}(q_{kl})=\tilde{R}^w_{kl}(q_{kl}) R^z_{ij}(q_{ij})\,,
\qquad
 \tilde{R}^z_{ij}(q_{ij}) \tilde{R}^w_{kl}(q_{kl})=\tilde{R}^w_{kl}(q_{kl})
 \tilde{R}^z_{ij}(q_{ij})\,,
 \eq
  \item Three term relations:
 \beq\label{AYBE}
 R_{ij}^z(q_i-q_j) R_{jk}^w(q_j-q_k)=R_{ik}^w(q_i-q_k)R_{ij}^{z-w}(q_i-q_j)+
 R_{jk}^{w-z}(q_j-q_k)R_{ik}^z(q_i-q_k)\,,
 \eq
 \beq\label{AYBE1}
 R_{ij}^z(q_i-q_j) \tilde{R}_{jk}^w(q_j+q_k)=
 \tilde{R}_{ik}^w(q_i+q_k)R_{ij}^{z-w}(q_i-q_j)
 +\tilde{R}_{jk}^{w-z}(q_j+q_k)\tilde{R}_{ik}^z(q_i+q_k)\,,
 \eq
 \beq\label{AYBE2}
 R_{jk}^{z}(q_j-q_k)\tilde{R}_{ik}^{w}(q_i+q_k)=\tilde{R}_{ij}^{w}(q_i+q_j) R_{jk}^{z-w}(q_j-q_k)+\tilde{R}_{ik}^{w-z}(q_i+q_k)\tilde{R}_{ij}^{z}(q_i+q_j)\,,
 \eq
 \beq\label{AYBE3}
 \tilde{R}_{jk}^{z}(q_j+q_k) R_{ik}^{w}(q_i-q_k)=\tilde{R}_{ij}^{w}(q_i+q_j) \tilde{R}_{jk}^{z-w}(q_j+q_k)+R_{ik}^{w-z}(q_i-q_k)\tilde{R}_{ij}^{z}(q_i+q_j)\,,
 \eq
 \item The four term relation (generalization of (\ref{hident1})):
 \beq\label{Fourrel}
 R_{ij}^{w+z}(q_i-q_j)\tilde{K}^w_j(q_j)=\tilde{K}^w_i(q_i)R_{ij}^{z-w}(q_i-q_j)+
 \tilde{R}_{ij}^{w-z}(q_i+q_j)\tilde{K}^z_i(q_i)+
 \tilde{K}_j^{-z}(q_j)\tilde{R}_{ij}^{w+z}(q_i+q_j)\,.
 \eq
  \end{itemize}
\end{proposition}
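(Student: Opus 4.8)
The plan is to reduce each relation to an identity among scalar elliptic functions by normal-ordering the group elements, since every operator in (\ref{SUope}), (\ref{Rtilde}), (\ref{Ktilde}) is a sum of a multiplication operator and a multiplication operator times one of the words $\hat{s}_{ij}$, $\hat{t}_i$, $\hat{s}_{ij}\hat{t}_i\hat{t}_j$. The easy relations come first. Relation (\ref{b28}) is immediate from (\ref{Rtilde}) because $x_i+x_j$ and $\hat{s}_{ij}\hat{t}_i\hat{t}_j=\hat{s}_{ji}\hat{t}_j\hat{t}_i$ are symmetric in $i,j$, while (\ref{b29}) follows by expanding $R^{-z}_{ji}(-q)$ from (\ref{SUope}) and applying the antisymmetry $\phi(-a,-b)=-\phi(a,b)$ (\ref{antiphi}) to both terms. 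The permutation relations (\ref{comm1})--(\ref{comm3}) need no elliptic input at all: for distinct indices the factors are built from $\hat{s}$'s, $\hat{t}$'s and multiplications acting on disjoint variables, hence commute.

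For the three-term relations I would take (\ref{AYBE}) as the base case and obtain the rest by conjugation. Substituting (\ref{SUope}) into (\ref{AYBE}), moving all transpositions to the right with $\hat{s}_{ij}\hat{s}_{jk}=\hat{s}_{ik}\hat{s}_{ij}=\hat{s}_{jk}\hat{s}_{ik}$, and collecting the coefficient of each reduced word reduces (\ref{AYBE}) to copies of the Fay identity (\ref{q114}) together with (\ref{antiphi}). Each of (\ref{AYBE1})--(\ref{AYBE3}) is then derived from a suitably relabeled copy of (\ref{AYBE}) by flipping the sign $q_m\mapsto -q_m$ of the index $m$ that must be reflected and conjugating by $\hat{t}_m$: inserting $\hat{t}_m^2=1$ between adjacent factors and using (\ref{b25}), (\ref{b26}), (\ref{b29}) and (\ref{operrel}) turns every $R$ carrying index $m$ into the corresponding $\tilde{R}$, promoting the argument $q_j-q_m$ to $q_j+q_m$, exactly as in the derivation of (\ref{b25a}).

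The four-term relation (\ref{Fourrel}) is, as expected, the main obstacle, and I anticipate essentially all the effort going here. I would expand the six factors using (\ref{SUope}), (\ref{Rtilde}), (\ref{Ktilde}) and normal-order with (\ref{b24}), (\ref{tacts}), (\ref{operrel}) until both sides are scalar functions of $x_i,x_j$ times one of the six words $1,\hat{t}_i,\hat{t}_j,\hat{s}_{ij},\hat{t}_i\hat{s}_{ij},\hat{s}_{ij}\hat{t}_i\hat{t}_j$. Matching word by word, four of the equations are elementary: the coefficients of $\hat{t}_j$ and $\hat{s}_{ij}$ coincide trivially, the coefficient of $\hat{t}_i$ is precisely (\ref{antiphi}), and the coefficient of $\hat{s}_{ij}\hat{t}_i\hat{t}_j$ reduces to the reflection relation $v(-z,u)=-v(z,-u)$ for (\ref{q413}), which follows from the quasi-periodicity of $\phi$. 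The two remaining equations --- the coefficients of the identity word and of $\hat{t}_i\hat{s}_{ij}$ --- are genuine four-term $v$-$\phi$ summation identities, and each is exactly the scalar identity (\ref{hident1}): the $\hat{t}_i\hat{s}_{ij}$ coefficient matches (\ref{hident1}) directly (with its spectral arguments set to $q_i,q_j$ and its position arguments to $x_i,-x_j$), while the identity-word coefficient matches after also using $v(-z,u)=-v(z,-u)$ and (\ref{antiphi}). The heart of the proof, and the step most prone to sign errors, is this last bookkeeping: keeping the spectral labels $z,w,z-w,w-z,w+z$ and the reflected arguments $-x_j$ straight while pinning each word's coefficient to the correct instance of (\ref{hident1}).
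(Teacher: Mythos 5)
Your proposal follows essentially the same route as the paper: (\ref{b28})--(\ref{comm3}) from the antisymmetry of $\phi$ and disjointness of variables, (\ref{AYBE}) by normal-ordering the transpositions and invoking the Fay identity, (\ref{AYBE1})--(\ref{AYBE3}) by conjugating (\ref{AYBE}) with the reflection operator of the index whose sign is flipped, and (\ref{Fourrel}) by expanding into the six group words, cancelling four coefficients via (\ref{antiphi}), and reducing the coefficients of the identity word and of $\hat{s}_{ij}\hat{t}_j=\hat t_i\hat s_{ij}$ to instances of (\ref{hident1}) --- precisely the paper's identities (\ref{b55}) and (\ref{b56}). The only caveat is that yours is a plan rather than an executed computation, but every reduction you name is exactly the one the paper carries out.
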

\begin{proof}
The relations (\ref{b28}) and (\ref{b29}) follows from (\ref{antiphi}). The relations (\ref{comm1}), (\ref{comm2}) and (\ref{comm3}) are valid since $t_i$ , $\hat{t}_j$ and $s_{kl}$  commute for distinct $i,j,k,l$ and do not change any arguments in the corresponding functions $v(z,x_i)$, $v(w,x_j)$ and $\phi(w,x_k\pm x_l)$. Also $s_{ij}s_{kl}=s_{kl}s_{ij}$ and $s_{kl}\phi(z,x_i\pm x_j)=\phi(z,x_i\pm x_j)s_{kl}$.

Let us prove (\ref{AYBE}). We use the
following short notations $q_{ij}=q_i-q_j$ and $x_{ij}=x_i-x_j$ and calculate each summand in (\ref{AYBE}):
 \beq
 \begin{array}{ccc}
 R_{ij}^z(q_i-q_j) R_{jk}^w(q_j-q_k)=\left(\phi(z,x_{ij})-\phi(q_{ij},x_{ij})s_{ij}\right)\left(\phi(w,x_{jk})-\phi(q_{jk},x_{jk})s_{jk}\right)=
 \\
 \\
 =\phi(z,x_{ij})\phi(w,x_{jk})\uuline{-\phi(z,x_{ij})\phi(q_{jk},x_{jk})s_{jk}}
 \\ \\
 \uwave{-\phi(q_{ij},x_{ij})\phi(w,x_{ik})s_{ij}}+\phi(q_{ij},x_{ij})\phi(q_{jk},x_{ik})s_{ij}s_{jk}\,,
 \end{array}
  \eq
   \beq
 \begin{array}{ccc}
 R_{ik}^w(q_i-q_k) R_{ij}^{z-w}(q_i-q_j)=\left(\phi(w,x_{ik})-
 \phi(q_{ik},x_{ik})s_{ik}\right)\left(\phi(z-w,x_{ij})-\phi(q_{ij},x_{ij})s_{ij}\right)=
 \\
 \\
 =\phi(w,x_{ik})\phi(z-w,x_{ij})\uwave{-\phi(w,x_{ik})\phi(q_{ij},x_{ij})s_{ij}}
 \\ \\
 \underline{-\phi(q_{ik},x_{ik})\phi(z-w,x_{kj})s_{ik}}+
 \phi(q_{ik},x_{ik})\phi(q_{ij},x_{kj})s_{ik}s_{ij}\,,
 \end{array}
  \eq
 \beq
 \begin{array}{ccc}
 R_{jk}^{w-z}(q_j-q_k) R_{ik}^{z}(q_i-q_k)=\left(\phi(w-z,x_{jk})-
 \phi(q_{jk},x_{jk})s_{jk}\right)\left(\phi(z,x_{ik})-
 \phi(q_{ik},x_{ik})s_{ik}\right)=
 \\
 \\
 =\phi(w-z,x_{jk})\phi(z,x_{ik})\underline{-\phi(w-z,x_{jk})\phi(q_{ik},x_{ik})s_{ik}}
 \\ \\
 \uuline{-\phi(q_{jk},x_{jk})\phi(z,x_{ij})s_{jk}}+\phi(q_{jk},x_{jk})\phi(q_{ik},x_{ij})s_{jk}s_{ik}\,.
 \end{array}
  \eq
  The corresponding underlined terms are cancelled identically due to the antisymmetry property (\ref{antiphi}).
   The rest of the terms vanish due to the Fay identity ({\ref{a07}}) written as
  \beq
  \phi(z,x_{ij})\phi(w,x_{jk})-\phi(w,x_{ik})\phi(z-w,x_{ij})-\phi(w-z,x_{jk})\phi(z,x_{ik})=0
  \eq
  and
  \beq
  \begin{array}{ccc}
  \phi(q_{ij},x_{ij})\phi(q_{jk},x_{ik})-
  \phi(q_{ik},x_{ik})\phi(q_{ij},x_{kj})-\phi(q_{jk},x_{jk})\phi(q_{ik},x_{ij})=
  \\ \ \\
  \,
    =\phi(q_{ij},x_{ij})\phi(q_{jk},x_{ik})+
    \phi(q_{ik},x_{ik})\phi(q_{ji},x_{jk})-\phi(q_{jk},x_{jk})\phi(q_{ik},x_{ij})=0
  \end{array}
  \eq
  and the following equality in the symmetric group $S_N$: $s_{ij}s_{jk}=s_{ik}s_{ij}=s_{jk}s_{ik}$.

  To prove (\ref{AYBE1}) we conjugate (\ref{AYBE}) by the operator $\hat{t}_j$ and use that $\hat{t}_j^2=1$ and $\hat{t}_j$ commutes with $R_{ik}^\hbar(q_i-q_k)$:
   \beq\label{b57}
 \hat{t}_j R_{ij}^z(q_i-q_j) \hat{t}_j^2 R_{jk}^w(q_j-q_k)\hat{t}_j=R_{ik}^w(q_i-q_k)\hat{t}_j R_{ij}^{z-w}(q_i-q_j)\hat{t}_j+
 \hat{t}_jR_{jk}^{w-z}(q_j-q_k)\hat{t}_j R_{ik}^z(q_i-q_k)\,,
 \eq
  then using (\ref{b25}) we rewrite (\ref{b57}) as
   \beq\label{b58}
 \tilde{R}_{ij}^z(q_i-q_j) \tilde{R}_{jk}^w(q_j-q_k)=R_{ik}^w(q_i-q_k) \tilde{R}_{ij}^{z-w}(q_i-q_j)+
 \tilde{R}_{jk}^{w-z}(q_j-q_k) R_{ik}^z(q_i-q_k)\,.
 \eq
 By replacing $q_k\rightarrow-q_k$ we obtain (\ref{AYBE1}). In the same manner one can prove (\ref{AYBE2}) and (\ref{AYBE3}) conjugating (\ref{AYBE}) by $\hat{t}_k$ and $\hat{t}_i$, respectively, and using (\ref{b26}).

  To prove (\ref{Fourrel}) we calculate each summand moving all operators $\hat{s}_{ij}$, $t_i$ and $t_j$ to the right and using (\ref{operrel}):
  \begin{eqnarray}
R_{ij}^{w+z}(q_i-q_j)\tilde{K}^w_j(q_j)=&\phi(w+z,x_i-x_j)v(w,x_j)\underline{-\phi(q_i-q_j,x_i-x_j)v(w,x_i)\hat{s}_{ij}}\\\notag
&\uuline{-\phi(z+w,x_i-x_j)v(q_j,x_j)\hat{t}_j}+\phi(q_i-q_j,x_i-x_j)v(q_j,x_i)\hat{s}_{ij}\hat{t}_j\,,
\end{eqnarray}
    \begin{eqnarray}
  \tilde{K}^w_i(q_i)R_{ij}^{z-w}(q_i-q_j)=&v(w,x_i)\phi(z-w,x_i-x_j)\underline{-v(w,x_i)\phi(q_i-q_j,x_i-x_j)\hat{s}_{ij}}\\\notag
&\uwave{-v(q_i,x_i)\phi(z-w,-x_i-x_j)\hat{t}_i}+v(q_i,x_i)\phi(q_i-q_j,-x_i-x_j)\hat{s}_{ij}\hat{t}_j\,,
\end{eqnarray}
    \begin{eqnarray}
 \tilde{R}_{ij}^{w-z}(q_i+q_j)\tilde{K}^z_i(q_i)=&\phi(w-z,x_i+x_j)v(z,x_i)\dashuline{-\phi(q_i+q_j,x_i+x_j)v(z,-x_j)\hat{s}_{ij}\hat{t}_i\hat{t}_j}\\\notag
&\uwave{-\phi(w-z,x_i+x_j)v(q_i,x_i)\hat{t}_i}+\phi(q_i+q_j,x_i+x_j)v(q_i,-x_j)\hat{s}_{ij}\hat{t}_j\,,
\end{eqnarray}
    \begin{eqnarray}
\tilde{K}_j^{-z}(q_j)\tilde{R}_{ij}^{w+z}(q_i+q_j)=&v(-z,x_j)\phi(w+z,x_i+x_j)\dashuline{-v(-z,x_j)\phi(q_i+q_j,x_i+x_j)\hat{s}_{ij}\hat{t}_i\hat{t}_j}\\\notag
&\uuline{-v(q_j,x_j)\phi(w+z,x_i-x_j)\hat{t}_j}+v(q_j,x_j)\phi(q_i+q_j,x_i-x_j)\hat{s}_{ij}\hat{t}_j\,.
\end{eqnarray}
The corresponding underlined terms are cancelled identically due (\ref{antiphi}). To finish the proof one should check the identities:
    \begin{eqnarray}\label{b55}
v(w,x_j)\phi(w+z,x_i-x_j)=v(w,x_i)\phi(z-w,x_i-x_j)&+&\\\notag
+v(z,x_i)\phi(w-z,x_i+x_j)&+&v(-z,x_j)\phi(w+z,x_i+x_j)\,,
\end{eqnarray}
and
    \begin{eqnarray}\label{b56}
v(q_j,x_i)\phi(q_i-q_j,x_i-x_j)=v(q_i,x_i)\phi(q_i-q_j,-x_i-x_j)&+&\\
\notag +v(q_i,-x_j)\phi(q_i+q_j,x_i+x_j)&+&v(q_j,x_j)\phi(q_i+q_j,x_i-x_j)\,.
\end{eqnarray}
The identities (\ref{b55}) and (\ref{b56}) follows from (\ref{hident1}).
\end{proof}

 Below we formulate some relations which are used for the proof of the Lax equation in the next Section.
\begin{corollary}
The following identities  hold true:
  \beq\label{q05a}
  \begin{array}{c}
  \displaystyle{
{ R}^z_{ij}{ F}^{z}_{jk}
-{ F}^z_{ij}{ R}^{z}_{jk}=
{ F}^{0}_{jk} R_{ik}^z-R_{ik}^z{ F}^{0}_{ij}\,,
 }
 \end{array}
 \eq
\beq
 \begin{array}{c}\label{q5131}
 	\displaystyle{
 		R^z_{ij}{\tilde F}^{z}_{jk}
 		-{F}^z_{ij}{\tilde R}^{z}_{jk}=
 		{\tilde F}^{0}_{jk}{\tilde R}^{z}_{ik}-{\tilde R}^{z}_{ik}F^{0}_{ij}\,,
 	}
 \end{array}
 \eq
 \beq
 \begin{array}{c}\label{q5132}
 	\displaystyle{
 		{\tilde R}^z_{ij}F^{-z}_{jk}
 		+{\tilde F}^z_{ij}R^{-z}_{jk}=
 		F^{0}_{jk}{\tilde R_{ik}}^z-{\tilde R_{ik}}^z{\tilde F}^{0}_{ij}\,,
 	}
 \end{array}
 \eq
  \beq\label{q513}
  \begin{array}{c}
  \displaystyle{
{\tilde R}^z_{ij}{\tilde F}^{-z}_{jk}
+{\tilde F}^z_{ij}{\tilde R}^{-z}_{jk}=
{\tilde F}^{0}_{jk} R_{ik}^z-R_{ik}^z{\tilde F}^{0}_{ij}\,,
 }
 \end{array}
 \eq
and
\beq\label{b301}
\tilde Y_j^0{\tilde R}_{ij}^{z}-{\tilde R}_{ij}^{z}\tilde Y_i^0=-2
{F}_{ij}^{z}\tilde{K}_j^{z}+2\tilde{K}_i^z
{F}_{ij}^{-z}+{R}_{ij}^{z}\tilde Y_j^{z}+\tilde Y_i^z{R}_{ij}^{-z}\,,
\eq
 \beq\label{b30}
\tilde{Y}_j^0 R_{ij}^{z}-R_{ij}^{z}\tilde{Y}_i^0=2
\tilde{F}_{ij}^{z}\tilde{K}_j^{-z}+2\tilde{K}_i^z\tilde{F}_{ij}^{-z}+
\tilde{R}_{ij}^{z}\tilde{Y}_j^{-z}+\tilde{Y}_i^z\tilde{R}_{ij}^{-z}\,,
\eq
where we use the short notations  $R_{ij}^\mu:=R_{ij}^\mu(q_i-q_j)$, $\tilde{R}_{ij}^\mu:=\tilde{R}_{ij}^\mu(q_i+q_j)$, $\tilde{K}_i^z:=\tilde{K}_i^z(q_i)$, $F_{ij}^z:=F_{ij}^\mu(q_i-q_j)$ and $\tilde{F}_{ij}^z:=\tilde{F}_{ij}^z(q_i+q_j)$, $\tilde{Y}_i^z:=\tilde{Y}_i^z(q_i)$.
\end{corollary}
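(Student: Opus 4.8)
The plan is to obtain all six identities as first-order (differential) consequences of the algebraic relations (\ref{AYBE})--(\ref{Fourrel}) established above, exactly in the spirit in which (\ref{q05})--(\ref{q06}) were deduced from (\ref{q02}), (\ref{q01}) and the relations (\ref{b273})--(\ref{b276}) from the unitarity identities. The key observation is that both $F^z_{ij}=\partial_q R^z_{ij}$ and $\tilde Y^z_i=\partial_q\tilde K^z_i$ (and $\tilde F$, $Y$) are derivatives in the \emph{position} argument, so the parent relations must be differentiated with respect to the $q_a$'s, while the shifted spectral parameters are eliminated by a suitable specialization $w\to z$ (resp. $w\to0$). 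One also uses that $F^z$ is in fact independent of its superscript, since the $z$-dependent part of $R^z$ is annihilated by $\partial_q$; the same holds for $\tilde Y^z$.

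For the three-term group (\ref{q05a})--(\ref{q513}) I would proceed as follows. Differentiate (\ref{AYBE}) with respect to the repeated (middle) position variable $q_j$: since $R^z_{ij}(q_i-q_j)$ and $R^{z-w}_{ij}(q_i-q_j)$ contribute $-F^{\,\cdot}_{ij}$, while $R^w_{jk}(q_j-q_k)$ and $R^{w-z}_{jk}(q_j-q_k)$ contribute $+F^{\,\cdot}_{jk}$, one gets a relation with parameters $z,w,z-w,w-z$; setting $w=z$ collapses $z-w,w-z\to0$ and reproduces (\ref{q05a}). The same recipe applied to (\ref{AYBE1}) gives (\ref{q5131}). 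For (\ref{q5132}) and (\ref{q513}), which carry the opposite sign pattern and the superscripts $z,-z$, I would first relabel indices in (\ref{AYBE2}), (\ref{AYBE3}) and use the symmetry $\tilde R^z_{ij}=\tilde R^z_{ji}$ (\ref{b28}), the antisymmetry $R^z_{ij}(q)=-R^{-z}_{ji}(-q)$ (\ref{b29}) and the conjugation (\ref{b26}) to trade a factor for its reflected counterpart $q\to-q$; differentiating and putting $w=z$ then yields the stated form with the two $+$ signs.

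For the four-term group (\ref{b301}), (\ref{b30}) I would differentiate the four-term relation (\ref{Fourrel}) separately in $q_i$ and in $q_j$, set $w\to0$, and \emph{add} the two resulting identities. The point is that $w=0$ is a singular specialization because $v(0,\cdot)$ has a simple pole, so each single differentiation produces singular terms proportional to $\tilde K^0_i$ and $\tilde K^0_j$; these enter the $q_i$- and $q_j$-differentiated relations with opposite signs and cancel in the sum, while the two mixed products of type $\tilde F\,\tilde K$ reinforce, which is precisely the origin of the factor $2$. Matching the operator ordering of the stated relation is achieved using the left--right reversal symmetry of the reflection equation (\ref{RE}) (equivalently, by starting from the order-reversed companion of (\ref{Fourrel})), and (\ref{b301}) is obtained in the same way from the $\hat t_i\hat t_j$-conjugate of (\ref{Fourrel}), which interchanges $R\leftrightarrow\tilde R$ as in (\ref{b25}), (\ref{b26}).

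The routine part consists of the three-term identities, which are immediate once one differentiates and sets $w=z$. The main obstacle is the four-term case: one must verify that the pole at $w=0$ genuinely cancels in the sum of the two position derivatives and that the surviving finite terms assemble, with the correct operator ordering and signs, into (\ref{b301}) and (\ref{b30}). This sign/ordering bookkeeping --- together with the repeated use of (\ref{b28}), (\ref{b29}), (\ref{b26}) to convert $\tilde R^z$ into $\tilde R^{-z}$ and $R^z$ into $R^{-z}$ --- is where essentially all of the work lies.
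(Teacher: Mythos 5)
Your strategy for the three-term identities (\ref{q05a})--(\ref{q513}) and for (\ref{b30}) is exactly the paper's: differentiate the parametric relations (\ref{AYBE})--(\ref{AYBE3}) in $q_j$ and specialize ($w:=z$, and additionally $z:=0$ for (\ref{q5132}), (\ref{q513})), and for (\ref{b30}) apply $\partial_{q_i}+\partial_{q_j}$ to the order-reversed companion of (\ref{Fourrel}) (the paper's (\ref{b31}), obtained by the swap $i\leftrightarrow j$, $z\to-z$ together with (\ref{b28}), (\ref{b29})) and set $w:=0$. One small remark: the worry about a pole at $w=0$ is moot, because $\partial_{q_i}+\partial_{q_j}$ annihilates the $F_{ij}\tilde K$ cross terms identically in $w$ (they come from differentiating $R_{ij}(q_i-q_j)$), so no $\tilde K^0$ ever appears and there is no singular cancellation to control.

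The step that would fail is your derivation of (\ref{b301}) from ``the $\hat t_i\hat t_j$-conjugate of (\ref{Fourrel})''. Conjugation by the product $\hat t_i\hat t_j$ does \emph{not} interchange $R\leftrightarrow\tilde R$: composing (\ref{b25}) and (\ref{b26}) gives $\hat t_i\hat t_j R^{\hbar}_{ij}(q)\hat t_j\hat t_i=-R^{-\hbar}_{ij}(-q)$, i.e.\ you land back on $R$ with reflected arguments, and similarly for $\tilde R$. Moreover (\ref{b301}) is not the naive $R\leftrightarrow\tilde R$ image of (\ref{b30}): compare the sign of the $F^z_{ij}\tilde K^z_j$ term and the superscripts $\tilde K^z_j$, $\tilde Y^z_j$ versus $\tilde K^{-z}_j$, $\tilde Y^{-z}_j$. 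The paper obtains (\ref{b301}) by a different and direct specialization of the \emph{same} relation (\ref{Fourrel}): apply the antisymmetric combination $\partial_{q_i}-\partial_{q_j}$ (which kills the derivatives of the $\tilde R_{ij}(q_i+q_j)$ factors and doubles those of $R_{ij}(q_i-q_j)$, producing (\ref{b31111})) and then set $z:=0$ rather than $w:=0$. With that replacement your argument goes through; as written, the route to (\ref{b301}) is not justified.
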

\begin{proof}
The identities (\ref{q05a}), (\ref{q5131}) are obtained from
 (\ref{AYBE}), (\ref{AYBE1}) by differentiating with respect to $q_j$ and substitution $w:=z$.
In the same manner
the identities (\ref{q5132}), (\ref{q513}) follow from (\ref{AYBE2}), (\ref{AYBE3}) by differentiating
with respect to $q_j$ and substitutions $z:=0$ and $w:=z$.

In order to prove (\ref{b301}) we apply the operator $(\partial_{q_i}-\partial_{q_j})$ to (\ref{Fourrel}):
\beq\label{b31111}
2F_{ij}^{w+z}\tilde{K}_j^w-R_{ij}^{w+z}\tilde{Y}_j^w=\tilde{Y}_i^w R_{ij}^{z-w}+2 \tilde{K}_i^w F_{ij}^{z-w}+\tilde{R}_{ij}^{w-z}\tilde{Y}_i^z-\tilde{Y}_j^{-z}\tilde{R}_{ij}^{w+z}
\eq
and put in $z:=0$ (\ref{b31111}).
By changing the indices $i\leftrightarrow j$ and the variables $q_i\leftrightarrow q_j$, $z\leftrightarrow -z$ in (\ref{Fourrel}), and using the  properties (\ref{b28}),  (\ref{b29}), we obtain:
\beq\label{b31}
\tilde{K}_j^w R_{ij}^{w+z}=R_{ij}^{-w+z}\tilde{K}_i^w+\tilde{R}^{w+z}_{ij}\tilde{K}_j^{-z}+\tilde{K}_i^z \tilde{R}_{ij}^{w-z}\,.
\eq
Next, by applying the operator $(\partial_{q_i}+\partial_{q_j})$ to (\ref{b31}) one gets:
\beq\label{b32}
\tilde{Y}_j^w R_{ij}^{w+z}=R_{ij}^{-w+z}\tilde{Y}_i^w+2 \tilde{F}_{ij}^{w+z}\tilde{K}_j^{-z}+2\tilde{K}_i^z\tilde{F}_{ij}^{w-z}+\tilde{R}_{ij}^{w+z}\tilde{Y}_j^{-z}+\tilde{Y}_i^z\tilde{R}_{ij}^{w-z}\,.
\eq
Finally, we put $w:=0$ and get (\ref{b30}).
\end{proof}

For the proof of the Lax equation in the next Section we also use the identity
 \beq
\begin{array}{c}\label{comm4}
	\displaystyle{
		R^z_{ij}{\tilde R}^{z}_{ji}={\tilde R}^{z}_{ji}R^z_{ij}\,.
	}
\end{array}
\eq

\section{\texorpdfstring{$R$}{R}-matrix-valued generalization of Takasaki's Lax pair}\label{sec5}
\setcounter{equation}{0}
Below we suggest $R$-matrix valued generalization of the Takasaki's Lax pair (\ref{q402})-(\ref{q410}).
That is we deal with $2n\times 2n$ Lax pair which entries are operators acting on functions of
$n$ variables as in (\ref{SUope}) and (\ref{K}).
\subsection{The Lax pair}
  \beq\label{q502}
  \begin{array}{c}
  \displaystyle{
\mL(z)=\mat{\mL^{11}(z)}{\mL^{12}(z)}{\mL^{21}(z)}{\mL^{22}(z)}\,,\qquad
\mM(z)=\mat{\mM^{11}(z)}{\mM^{12}(z)}{\mM^{21}(z)}{\mM^{22}(z)}+\mH\, 1_{2n\times 2n}\,,
 }
 \end{array}
 \eq
where all entries $\mL^{ab}(z)$ and $\mM^{ab}(z)$, $a,b=1,2$ are as follows:
  \beq\label{q503}
  \begin{array}{c}
  \displaystyle{
\mL^{11}_{ij}(z)=\delta_{ij}p_i+g(1-\delta_{ij})R_{ij}^z(q_{ij})\,,
 }
 \end{array}
 \eq
  \beq\label{q504}
  \begin{array}{c}
  \displaystyle{
\mL^{12}_{ij}(z)=\delta_{ij}\tilde{K}^z_i(q_i)+g(1-\delta_{ij})\tilde{R}^z_{ij}(q^+_{ij})\,,
 }
 \end{array}
 \eq
  \beq\label{q505}
  \begin{array}{c}
  \displaystyle{
\mL^{21}_{ij}(z)=-\delta_{ij}\tilde{K}^{-z}_i(q_i)-g(1-\delta_{ij})\tilde{R}^{-z}_{ij}(q^+_{ij})\,,
 }
 \end{array}
 \eq
  \beq\label{q506}
  \begin{array}{c}
  \displaystyle{
\mL^{22}_{ij}(z)=-\delta_{ij}p_i-g(1-\delta_{ij})R^{-z}_{ij}(q_{ij})
 }
 \end{array}
 \eq
and
  \beq\label{q507}
  \begin{array}{c}
  \displaystyle{
\mM^{11}_{ij}(z)=\delta_{ij}A_i+g(1-\delta_{ij})F^z_{ij}(q_{ij})\,,
 }
 \end{array}
 \eq
  \beq\label{q508}
  \begin{array}{c}
  \displaystyle{
\mM^{12}_{ij}(z)=\frac12\,\delta_{ij}\tilde{Y}_i^z(q_i)+g(1-\delta_{ij})\tilde{F}^z_{ij}(q^+_{ij})\,,
 }
 \end{array}
 \eq
  \beq\label{q509}
  \begin{array}{c}
  \displaystyle{
\mM^{21}_{ij}(z)=\frac12\,\delta_{ij}\tilde{Y}^{-z}_i(q_i)+g(1-\delta_{ij})\tilde{F}^{-z}_{ij}(q^+_{ij})\,,
 }
 \end{array}
 \eq
  \beq\label{q510}
  \begin{array}{c}
  \displaystyle{
\mM^{22}_{ij}(z)=\delta_{ij}A_i+g(1-\delta_{ij})F^{-z}_{ij}(q_{ij})\,.
 }
 \end{array}
 \eq
Here
  \beq\label{q511}
  \begin{array}{c}
  \displaystyle{
A_i=-\frac12\,\tilde{Y}_i^0(q_i)-g\sum_{k:k\neq i}^n \Big(F^0_{ik}(q_{ik})+{\ti F}^0_{ik}(q_{ik}^+)\Big)
 }
 \end{array}
 \eq
 and
  \beq\label{q512}
  \begin{array}{c}
  \displaystyle{
\mH=g\sum_{k<l}^n \Big(F^0_{kl}(q_{kl})+{\ti F}^0_{kl}(q_{kl}^+)\Big)+
\frac12\sum\limits_{k=1}^n \tilde{Y}_k^0(q_k)\,.
 }
 \end{array}
 \eq
In the next subsection we prove the following statement.
 \begin{theorem}
 The Lax pair (\ref{q502})-(\ref{q512}) satisfies the Lax equation
  \beq\label{q5124}
  \begin{array}{c}
  \displaystyle{
{\dot \mL}(z)=[\mL(z),\mM(z)]
 }
 \end{array}
 \eq
 and provides the equations of motion for the classical Calogero-Inozemtsev model (\ref{q4011}).
 \end{theorem}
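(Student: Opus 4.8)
The plan is to reproduce, at the operator level, the block-by-block verification carried out for the scalar Takasaki pair in Section~\ref{sec32}, replacing each scalar elliptic identity used there by its operator counterpart established in Section~\ref{sec4}. First I would note that the operator Lax pair (\ref{q503})--(\ref{q510}) obeys the same (anti)symmetry as (\ref{Lanti})--(\ref{Msym}), namely $\mL^{11}(z)=-\mL^{22}(-z)$, $\mL^{12}(z)=-\mL^{21}(-z)$ and $\mM^{11}(z)=\mM^{22}(-z)$, $\mM^{12}(z)=\mM^{21}(-z)$; these follow from the antisymmetry (\ref{b29}) together with the definitions (\ref{q505}), (\ref{q509}), (\ref{q510}). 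As in Section~\ref{sec32}, and since the scalar term $\mH\,1_{2n}$ is compatible with the block (anti)symmetry, this reduces the proof of (\ref{q5124}) to the two block equations (\ref{block11}) and (\ref{block12}), while (\ref{block21}) and (\ref{block22}) follow automatically.

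For the diagonal entries of block $11$ I would collect the boundary contribution $\mL^{12}_{ii}\mM^{21}_{ii}-\mM^{12}_{ii}\mL^{21}_{ii}=\tfrac12\bigl(\tilde K^{z}_i\tilde Y^{-z}_i+\tilde Y^{z}_i\tilde K^{-z}_i\bigr)$, which by (\ref{b276}) equals $\tfrac12\sum_a\nu_a^2\wp'(q_i+\om_a)$, and the spectator sum over $k\neq i$, whose $q_{ik}$--part is evaluated by (\ref{b273}) and whose $q^{+}_{ik}$--part by (\ref{b274}), yielding $g^2\sum_{k\neq i}\bigl(\wp'(q_{ik})+\wp'(q^{+}_{ik})\bigr)$. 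Their sum is precisely $\dot p_i$ of (\ref{q4011}). The diagonal of block $12$ is treated the same way: the terms with coinciding indices give $\dot q_i\,\tilde Y^{z}_i=p_i\tilde Y^{z}_i$, and the remaining $g^2$--sum over $k\neq i$ cancels by the differentiated three-term relations (\ref{q5131}), (\ref{q5132}) together with (\ref{b28}), (\ref{b29}).

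The off-diagonal part of block $11$ is the substantial step. Mirroring (\ref{q424})--(\ref{q425}) I would split the right-hand side into: a term $F^{z}_{ij}(p_i-p_j)$ matching the left-hand side $(\dot q_i-\dot q_j)F^{z}_{ij}$; a group of boundary terms involving $\tilde K$, $\tilde R$ and $\tilde Y$, which cancel by the four-term consequence (\ref{b301}) (the operator analogue of (\ref{hident3})/(\ref{q440})); and a spectator sum over $k\neq i,j$ of three-index operator products. The last group, together with the commutator $[\,gR^{z}_{ij},\mH\,]$ coming from the $\mH\,1_{2n}$ piece of $\mM$, must reorganise into $A_iR^{z}_{ij}-R^{z}_{ij}A_j$, with $A_i$ and $\mH$ given by (\ref{q511}), (\ref{q512}). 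This is the operator analogue of the $\mathrm{A}$-type identity (\ref{q34}); it works because, by (\ref{comm1})--(\ref{comm3}), all operators carrying disjoint index sets commute, so only the touching products contribute, and these telescope by (\ref{q05a}), (\ref{q5131}), (\ref{q513}). The off-diagonal of block $12$ is then obtained, exactly as in Section~\ref{sec32}, by the substitution $q_j,p_j\to-q_j,-p_j$, with (\ref{b30}) playing the role of (\ref{b301}).

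I expect the main obstacle to be precisely this last reorganisation. In contrast with Krichever's $\mathrm{gl}_n$ case, the matrix entries here are operators on the same space of functions, so the commutators $[\mL_{ij},\mH]$ and the cross-terms mixing the $R$-- and $\tilde R$--blocks do not vanish term by term; the role of the identity (\ref{comm4}) and of the relations (\ref{b301}), (\ref{b30}) derived from the four-term relation (\ref{Fourrel}) is exactly to absorb these extra contributions. Keeping track of the signs and of the $z\to-z$ reflections while carrying out this bookkeeping, so that every stray operator term is collected into either $A_i$ or the scalar $\mH$, is the delicate part of the computation.
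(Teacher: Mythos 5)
Your proposal is correct and follows essentially the same route as the paper: a block-by-block verification mirroring Section \ref{sec32}, with the diagonal entries handled by the differentiated unitarity relations (\ref{b273})--(\ref{b276}), the boundary terms by the differentiated four-term relations, the spectator sums and the commutator with $\mH$ by (\ref{q05a})--(\ref{q513}) plus the disjoint-index commutativity, and blocks 21, 22 reduced via the (anti)symmetries (\ref{Lanti2})--(\ref{Msym2}). Two citations are transposed relative to the paper --- it is (\ref{b30}) that cancels the boundary terms in block 11 and (\ref{b301}) in block 12, and the diagonal of block 12 needs the two-index commutativity (\ref{comm4}) rather than the three-index identities (\ref{q5131}), (\ref{q5132}) --- but these are bookkeeping slips, not gaps in the argument.
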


\subsection{Proof of the Lax equation}
Here we generalize the calculations from the Section \ref{sec32}.
\paragraph{The block 11.}

The form of the equations in this case is similar to (\ref{block11}), but this time
 it is necessary to take into account the additional  term $\mH$ in the diagonal part of $\mM$.
 Consequently, for $a=1,2$ we have
\beq\label{q514}
\begin{array}{c}
  \displaystyle{
\tilde{\mM}^{aa}_{ii}(z):=	\mM^{aa}_{ii}(z)+\mH = A_i+\mH =
\frac12\sum\limits_{k,l\neq i}^n g\Big(F^0_{kl}(q_{kl})+{\ti F}^0_{kl}(q_{kl}^+)\Big)+
\frac12\sum\limits_{k\neq i}^n {\tilde Y}_k^0(q_k)\,,
	}
\end{array}
\eq
where we used the properties
\beq\label{symF}
F_{kl}^0=F_{lk}^0 \qquad \text{and} \qquad \tilde{F}_{kl}^0=\tilde{F}_{lk}^0
\eq
following from (\ref{b28}) and (\ref{b29}).

The diagonal part takes the form:
\beq\label{q422new}
  \begin{array}{c}
  \displaystyle{
\dot{\mL^{11}_{ii}}=\mL^{11}_{ii}{\tilde \mM}^{11}_{ii}-
{\tilde \mM}^{11}_{ii}\mL^{11}_{ii}+\mL^{12}_{ii}\mM^{21}_{ii}-
\mM^{12}_{ii}\mL^{21}_{ii}+
	}
	\\ \ \\
	\displaystyle{
+\sum\limits_{k:k\neq i}^n
\Big(\mL^{12}_{ik}\mM^{21}_{ki}-\mM^{12}_{ik}\mL^{21}_{ki}+
\mL^{11}_{ik}\mM^{11}_{ki}-\mM^{11}_{ik}\mL^{11}_{ki}\Big)\,.
 }
 \end{array}
 \eq
 Plugging the elements of $\mL$- and $\mM$-operators into (\ref{q422new}),
  we obtain the equation of motion (\ref{q4011}):
 \beq\label{q515}
\begin{array}{c}
	\displaystyle{
		\dot{p_{i}}=p_i(A_i+\mH)-(A_i+\mH)p_i+
\frac{1}{2}{\tilde K}^z_i {\tilde Y}^{-z}_i+\frac{1}{2}{\tilde Y}^{z}_i{\tilde K}^{-z}_i+
	}
	\\ \ \\
	\displaystyle{
+g^{2}\sum\limits_{k:k\neq i}^n\Big({\tilde R}^z_{ik}{\tilde F}^{-z}_{ki}+
{\tilde F}^z_{ik}{\tilde R}^{-z}_{ki}+
		R^z_{ik}F^{z}_{ki}-F^z_{ik}R^{z}_{ki}\Big)=
	}
	\\ \ \\
	\displaystyle{
		=\frac{1}{2}\sum\limits_{a=0}^3\nu_a^2\wp'(q_{i}+\om_a)+g^{2}\sum\limits_{k:k\neq i}^n\Big(\wp'(q^{+}_{ik})+\wp'(q_{ik})\Big)\,,
	}
\end{array}
\eq
where we used (\ref{b276}), (\ref{b274}), (\ref{b273}).

For the non-diagonal part we need to show
\beq\label{q424new}
  \begin{array}{c}
  \displaystyle{
\dot{\mL^{11}_{ij}}=\mL^{11}_{ii}\mM^{11}_{ij}-\mM^{11}_{ij}\mL^{11}_{jj}+
\mL^{12}_{ij}\mM^{21}_{jj}-\mM^{12}_{ij}\mL^{21}_{jj}+
\mL^{12}_{ii}\mM^{21}_{ij}-\mM^{12}_{ii}\mL^{21}_{ij}+
 }
 \\ \ \\
 \displaystyle{
+
\mL^{11}_{ij}\tilde{\mM}^{11}_{jj}-\tilde{\mM}^{11}_{ii}\mL^{11}_{ij} +\sum\limits_{k:k\neq i,j}^n\Big(\mL^{12}_{ik}\mM^{21}_{kj}-\mM^{12}_{ik}\mL^{21}_{kj}+
 \mL^{11}_{ik}\mM^{11}_{kj}-\mM^{11}_{ik}\mL^{11}_{kj}\Big)\,.
 }
 \end{array}
 \eq
 Here we use the notation $\tilde{\mM}_{ii}^{aa}$ from (\ref{q514}).
The l.h.s. of (\ref{q424new}) equals $g(\dot{q_i}-\dot{q_j})F^z_{ij}$.
 For the r.h.s. of (\ref{q424new}) we have:
\beq\label{q516}
\begin{array}{c}
	\displaystyle{
		g(p_{i}-p_{j})F^z_{ij}+
		\frac{g}{2}\,{\tilde R}^z_{ij}{\tilde Y}^{-z}_j+
g{\tilde F}^z_{ij}{\tilde K}^{-z}_j+g{\tilde K}^{z}_i{\tilde F}^{-z}_{ij}+
\frac{g}{2}\,{\tilde Y}^{z}_i{\tilde R}^{-z}_{ij}+
	}
    \\ \ \\
    \displaystyle{
    +\frac{g}{2} R^{z}_{ij}\Big(\sum\limits_{k,l\neq j}^n g
    \Big(F^0_{kl}+{\ti F}^0_{kl}\Big)+\sum\limits_{k\neq j}^n {\tilde Y}_k^0\,\Big)-
    \frac{g}{2}\Big(\sum\limits_{k,l\neq i}^n g\Big(F^0_{kl}+{\ti F}^0_{kl}\Big)+
    \sum\limits_{k\neq i}^n {\tilde Y}_k^0\,\Big)R^{z}_{ij}+
    }
    \\ \ \\
    \displaystyle{
		+g^2\sum\limits_{k:k\neq i,j}^n\Big({\tilde R}^z_{ik}{\tilde F}^{-z}_{kj}
			+{\tilde F}^z_{ik}{\tilde R}^{-z}_{kj}
			+R^z_{ik}F^z_{kj}-F^z_{ik}R^z_{kj}\Big) =
	}
    \end{array}
	\eq
 $$
\begin{array}{c}
    \displaystyle{
		=g(p_{i}-p_{j})F^z_{ij}+g\left(
		\frac{1}{2}\,{\tilde R}^z_{ij}{\tilde Y}^{-z}_j+
{\tilde F}^z_{ij}{\tilde K}^{-z}_j+\frac{1}{2}\,{\tilde Y}^{z}_i{\tilde R}^{-z}_{ij}+{\tilde K}^{z}_i{\tilde F}^{-z}_{ij}+\frac12 R^{z}_{ij}{\tilde Y}_i^0-\frac12 {\tilde Y}_j^0R^{z}_{ij}\right)+
	}
    \\ \ \\
    \displaystyle{
    \underline{+g^2R^{z}_{ij}\sum\limits_{k:k\neq i,j}^n
    \Big(F^0_{ik}+{\ti F}^0_{ik}\Big)
    -g^2\sum\limits_{k:k\neq i,j}^n \Big(F^0_{kj}+{\ti F}^0_{kj}\Big)R^{z}_{ij}+}
    }
    \\ \ \\
    \displaystyle{
		\underline{+g^2\sum\limits_{k:k\neq i,j}^n\Big({\tilde R}^z_{ik}{\tilde F}^{-z}_{kj}
			+{\tilde F}^z_{ik}{\tilde R}^{-z}_{kj}
			+R^z_{ik}F^z_{kj}-F^z_{ik}R^z_{kj}\Big)} =
	}
    \\ \ \\
    \displaystyle{
		=g(p_{i}-p_{j})F^z_{ij}\,.
	}
\end{array}
$$
In the first equality we used the commutativity properties (\ref{comm1}), (\ref{comm2}), (\ref{comm3})
 and their consequences
 $F_{kl}^0R_{ij}^z=R_{ij}^zF_{kl}^0$ and $\tilde{F}_{kl}^0R_{ij}^z=R_{ij}^z\tilde{F}_{kl}^0$
 for distinct $i,j,k,l$, so that
the corresponding terms with distinct indices vanish. Also, we used (\ref{symF}).
The terms inside the brackets vanish due to (\ref{b30}).  The underlined terms
are cancelled out due to (\ref{q5132}) and (\ref{q05a}).
\paragraph{The block 12.}
In the diagonal part we need to prove the following equation
\beq\label{q426new}
  \begin{array}{c}
  \displaystyle{
\dot{\mL^{12}_{ii}}=\mL^{11}_{ii}\mM^{12}_{ii}-
\mM^{12}_{ii}\mL^{22}_{ii}+\mL_{ii}^{12}\tilde\mM_{ii}^{22}-\tilde\mM_{ii}^{11}\mL_{ii}^{12}+}
\\ \ \\
\displaystyle{+\sum\limits_{k:k\neq i}^n
\Big(\mL^{11}_{ik}\mM^{12}_{ki}-\mM^{11}_{ik}\mL^{12}_{ki}+
\mL^{12}_{ik}\mM^{22}_{ki}-\mM^{12}_{ik}\mL^{22}_{ki}\Big)\,.
 }
 \end{array}
 \eq
Here we have additional terms $\mL_{ii}^{12}\tilde\mM_{ii}^{22}-\tilde\mM_{ii}^{11}\mL_{ii}^{12}$,
which are absent in (\ref{q426}). However, the equality
$\mL_{ii}^{12}\tilde\mM_{ii}^{22}=\tilde\mM_{ii}^{11}\mL_{ii}^{12}$ holds true in the operator case as well.
This follows from the explicit form of $\tilde\mM_{ii}^{aa}$ given in (\ref{q514}) and (\ref{comm1}), (\ref{comm2}).
Therefore, similarly to (\ref{q427}) we have:
\beq\label{q517}
\begin{array}{c}
	\displaystyle{
		\dot{q_{i}}{\tilde Y}^z_i=p_i{\tilde Y}^z_i+
g^{2}\sum\limits_{k:k\neq i}^n\Big(R^z_{ik}(q_{ik}){\tilde F}^{z}_{ki}(q^{+}_{ik})+{\tilde F}^z_{ik}(q^{+}_{ik})R^{-z}_{ki}(q_{ki})+
	}
	\\ \ \\
	\displaystyle{
		+{\tilde R}^z_{ik}(q^{+}_{ik})F^{-z}_{ki}(q_{ki})-
F^z_{ik}(q_{ik}){\tilde R}^{z}_{ki}(q^{+}_{ki})\Big)=p_i{\tilde Y}^z_i\,,
	}
\end{array}
\eq
where we used the
commutativity\footnote{A sufficient condition is that $\displaystyle\frac{\partial}{\partial q_k}\left(R_{ik}^z(q_i-q_k)\tilde{R}^z_{ik}(q_i+q_k)-\tilde{R}_{ik}^z(q_i+q_k)R^z_{ik}(q_i-q_k)\right)=0$.}
 (\ref{comm4})  and the skew-symmetry (\ref{b29}).

For the non-diagonal part we have the equation:

\beq\label{q428new}
  \begin{array}{c}
  \displaystyle{
\dot{\mL^{12}_{ij}}=\mL^{11}_{ii}\mM^{12}_{ij}-\mM^{12}_{ij}\mL^{22}_{jj}+
\mL^{11}_{ij}\mM^{12}_{jj}-\mM^{11}_{ij}\mL^{12}_{jj}+
\mL^{12}_{ii}\mM^{22}_{ij}-
\mM^{12}_{ii}\mL^{22}_{ij}+\mL^{12}_{ij}\tilde\mM^{22}_{jj}-\tilde{\mM}^{11}_{ii}\mL^{12}_{ij}
 }
 \\ \ \\
 \displaystyle{
 +\sum\limits_{k:k\neq i,j}^n\Big(\mL^{11}_{ik}\mM^{12}_{kj}-\mM^{11}_{ik}\mL^{12}_{kj}+
 \mL^{12}_{ik}\mM^{22}_{kj}-\mM^{12}_{ik}\mL^{22}_{kj}\Big)\,.
 }
 \end{array}
 \eq

Similarly to (\ref{q516}), the l.h.s. of (\ref{q428new}) equals $g(\dot{q_i}+\dot{q_j}){\tilde F}^z_{ij}$.
For the r.h.s. of (\ref{q428new}) we have:
\beq\label{q518}
\begin{array}{c}
	\displaystyle{
		g(p_{i}+p_{j}){\tilde F}^z_{ij}+
		\frac{g}{2}\,{R}^z_{ij}{\tilde Y}^{z}_j-g{F}^z_{ij}{\tilde K}^{z}_j+g{\tilde K}^{z}_i{F}^{-z}_{ij}+\frac{g}{2}\,{\tilde Y}^{z}_i{R}^{-z}_{ij}+
	}
	\\ \ \\
	\displaystyle{
		+\frac g2 {\tilde R}^{z}_{ij}\Big(\sum\limits_{k,l\neq j}^n g
\Big(F^0_{kl}+{\ti F}^0_{kl}\Big)+\sum\limits_{k\neq j}^n {\tilde Y}_k^0\,\Big)-
\frac g2\Big(\sum\limits_{k,l\neq i}^n g\Big(F^0_{kl}+{\ti F}^0_{kl}\Big)
+\sum\limits_{k\neq i}^n {\tilde Y}_k^0\,\Big){\tilde R}^{z}_{ij}+
	}
	\\ \ \\
	\displaystyle{
		+g^2\sum\limits_{k:k\neq i,j}^n\Big(R^z_{ik}{\tilde F}^z_{kj}-F^z_{ik}{\tilde R}^z_{kj}+{\tilde R}^z_{ik}{F}^{-z}_{kj}
		+{\tilde F}^z_{ik}{R}^{-z}_{kj}
		\Big) =
	}
	\end{array}
	\eq
	$$
	\begin{array}{c}
	\displaystyle{
		=g(p_{i}+p_{j}){\tilde F}^z_{ij}+g\left(
\frac{1}{2}\,{R}^z_{ij}{\tilde Y}^{z}_j-{F}^z_{ij}{\tilde K}^{z}_j+
{\tilde K}^{z}_i{F}^{-z}_{ij}+\frac{1}{2}\,{\tilde Y}^{z}_i{R}^{-z}_{ij}
+\frac12 {\tilde R}^{z}_{ij}{\tilde Y}_i^0-\frac12 {\tilde Y}_j^0{\tilde R}^{z}_{ij}\right)
	}
	\\ \ \\
	\displaystyle{
		\underline{+g^2{\tilde R}^{z}_{ij}\sum\limits_{k:k\neq i,j}^n\Big(F^0_{ik}+{\ti F}^0_{ik}\Big)-g^2\sum\limits_{k:k\neq i,j}^n \Big(F^0_{kj}+{\ti F}^0_{kj}\Big){\tilde R}^{z}_{ij}+}
	}
	\\ \ \\
	\displaystyle{
		\underline{+g^2\sum\limits_{k:k\neq i,j}^n
\Big(R^z_{ik}{\tilde F}^z_{kj}-F^z_{ik}{\tilde R}^z_{kj}+{\tilde R}^z_{ik}{F}^{-z}_{kj}
		+{\tilde F}^z_{ik}{R}^{-z}_{kj}\Big)} =
	}
	\\ \ \\
	\displaystyle{
		=g(p_{i}+p_{j}){\tilde F}^z_{ij}\,.
	}
\end{array}
$$
The terms inside the brackets  vanish due to (\ref{b301}).
 The underlined terms are cancelled out due to (\ref{q5131}), (\ref{q5132}).
\paragraph{The block 21 and 22.}
In the R-operator case, (anti)symmetry properties
(\ref{Lanti}), (\ref{Msym}) for $\mL$, $\mM$ remain valid:
\beq\label{Lanti2}
\mL^{11}(z)=-\mL^{22}(-z), \qquad \mL^{12}(z)=-\mL^{21}(-z),
\eq
\beq\label{Msym2}
{\tilde \mM}^{11}(z)={\tilde \mM}^{22}(-z), \qquad \mM^{12}(z)=\mM^{21}(-z).
\eq
The equations
\beq\label{Rblock21}
\dot{\mL}^{21}(z)=\mL^{21}(z){\tilde \mM}^{11}(z)+\mL^{22}(z)\mM^{21}(z)-\mM^{21}(z)\mL^{11}(z)-{\tilde \mM}^{22}(z)\mL^{21}(z),
\eq
\beq\label{Rblock22}
\dot{\mL}^{22}(z)=\mL^{21}(z)\mM^{12}(z)+\mL^{22}(z){\tilde \mM}^{22}(z)-\mM^{21}(z)\mL^{12}(z)-{\tilde \mM}^{22}(z)\mL^{22}(z)
\eq
are correct due to (\ref{q422new}), (\ref{q424new}), (\ref{q426new}), (\ref{q428new}) and (\ref{Lanti2})-(\ref{Msym2}).
%
\section{Representation via the Baxter's 8-vertex \texorpdfstring{$R$}{R}-matrix}\label{sec6}
\setcounter{equation}{0}
The Baxter's $R$-matrix
 has the form \cite{BB}:
 \beq\label{w810}
 \begin{array}{c}
  \displaystyle{
 R_{12}^\hbar(z)
 =\frac12\sum\limits_{k=0}^3\vf_{k}\Big(z,\om_{k}+\frac{\hbar}{2}\Big)\sigma_{4-k}\otimes\sigma_{4-k}\,,
  }
 \end{array}
\eq
 where
 \beq\label{w8101}
 \begin{array}{c}
  \displaystyle{
 \sigma_0=\mats{1}{0}{0}{1}\,,\quad
  \sigma_1=\mats{0}{1}{1}{0}\,,\quad
  \sigma_2=\mats{0}{-\imath}{\imath}{0}\,,\quad
   \sigma_3=\mats{1}{0}{0}{-1}
  }
 \end{array}
\eq
 are the Pauli matrices and
 $\sigma_4=\sigma_0=1_{2\times 2}$.  Equivalently,
 \beq\label{w207}
 \begin{array}{c}
 R_{12}^\hbar(z)=\displaystyle{\frac12}\left(
 \begin{array}{cccc}
 \vf_{00}+\vf_{10} & 0 & 0 & \vf_{01}-\vf_{11}
 \\
 0 & \vf_{00}-\vf_{10} & \vf_{01}+\vf_{11} & 0
  \\
 0 & \vf_{01}+\vf_{11} & \vf_{00}-\vf_{10} & 0
 \\
 \vf_{01}-\vf_{11} & 0 & 0 & \vf_{00}+\vf_{10}
 \end{array}
 \right)\,,
  \end{array}
 \eq
  where
 \beq\label{w208}
 \begin{array}{c}
  \displaystyle{
 \vf_{00}=\phi\Big(z,\frac{\hbar}{2}\Big)=\vf_{0}\Big(z,\om_{0}+\frac{\hbar}{2}\Big)\,,\quad
 \vf_{10}=\phi\Big(z,\frac{1+\hbar}{2}\Big)=\vf_{1}\Big(z,\om_{1}+\frac{\hbar}{2}\Big)\,,
 }
 \\ \ \\
  \displaystyle{
 \vf_{01}=e^{\pi\imath z}\phi\Big(z,\frac{\tau+\hbar}{2}\Big)=\vf_{3}\Big(z,\om_{3}+\frac{\hbar}{2}\Big)\,,\quad
 \vf_{11}=e^{\pi\imath z}\phi\Big(z,\frac{1+\tau+\hbar}{2}\Big)=\vf_{2}\Big(z,\om_{2}+\frac{\hbar}{2}\Big)\,.
 }
  \end{array}
 \eq
Define also the $K$-matrix as
 \beq\label{w811}
 \begin{array}{c}
  \displaystyle{
 {\tilde K}^\hbar(z)=\sum\limits_{k=0}^3\nu_{k}
 e^{-2\pi\imath\hbar\p_\tau\om_{k}}\vf_{k}(z-\om_{k},\hbar+\om_{k})\sigma_{4-k}=
}
\\ \ \\
  \displaystyle{
 =
 \sum\limits_{k=0}^3\nu_{k}
 e^{2\pi\imath (z+\hbar+\om_k)\p_\tau\om_{k}}\phi(z+\om_{k},\hbar+\om_{k})\sigma_{4-k}\,,
  }
 \end{array}
\eq
 where $\nu_4=\nu_0$ is assumed, and $\p_\tau\om_{k}$ is equal to either $0$ (for $\om_0$, $\om_1$) or
 $1/2$ (for $\om_2$, $\om_3$).
  Up to simple redefinitions it coincides with the known $K$-matrix from \cite{IK,KH2}.
  It satisfies the property (\ref{unitildeK}) that can be deduced from (\ref{a10}) and the
  quasi-periodic behaviour (\ref{a0621})-(\ref{a0624}).
  Notice also that ${\tilde K}^\hbar(z)={\tilde K}^z(\hbar)$, that is ${K}^\hbar(z)={\tilde K}^\hbar(z)$ in this description.
 \begin{predl}
 Let $R^\hbar_{ij}(q)$ be the Baxter's $R$-matrix (\ref{w810}).
 Then the $K$-matrix (\ref{w811}) satisfies the relations (\ref{Fourrel}) and (\ref{b30}) with
 $R^\hbar_{ij}=R^\hbar_{ij}(q_i-q_j)$ and ${\ti R}^\hbar_{ij}= R^\hbar_{ij}(q_i+q_j)$.
 \end{predl}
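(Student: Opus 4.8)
The cleanest route is to deduce the two matrix identities from their already-established operator analogues via the Felder--Pasquier correspondence \cite{FP}. Recall that Felder and Pasquier showed that the Shibukawa--Ueno operator $R^\hbar_{ij}(q)$ of (\ref{SUope}), when restricted to act on a suitable finite-dimensional space $\Theta$ of theta functions in the $n$ variables $x_1,\dots,x_n$, is represented by the Baxter 8-vertex matrix (\ref{w810}) acting in the $i$-th and $j$-th tensor factors of ${\rm Mat}(2,\mC)^{\otimes n}$; correspondingly, the Komori--Hikami $K$-operator (\ref{Ktilde}) restricts to the $K$-matrix (\ref{w811}), which is the matricisation carried out in \cite{KH2}. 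Since the four-term relation (\ref{Fourrel}) is an \emph{operator} identity on the whole space of meromorphic functions, proven in the previous Proposition for the Shibukawa--Ueno $R$ and the Komori--Hikami $\tilde K$, it restricts to every invariant subspace; evaluating it on $\Theta$ therefore yields precisely the matrix relation (\ref{Fourrel}) for (\ref{w810}) and (\ref{w811}).

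\textbf{Steps.} First I would fix the representation space $\Theta$ (the span of theta functions of the appropriate level and characteristics) so that $\Theta\cong(\mC^2)^{\otimes n}$, and record the matrix form on $\Theta$ of the three building blocks: the transposition $\hat s_{ij}$, the reflection $\hat t_i$, and the multiplication operators by $\phi(\cdot,x_i\pm x_j)$ and $v(\cdot,x_i)$. Second, I would check that $\Theta$ is invariant under every operator occurring in (\ref{Fourrel}), namely $R^{w+z}_{ij}$, the two copies of $\tilde R_{ij}$, and $\tilde K^{w}_i,\tilde K^{-z}_j$; this is exactly the point where the Shibukawa--Ueno/Komori--Hikami choice of $\Theta$ is needed. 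Third, I would identify the restriction of $R^\hbar_{ij}(q)$ with (\ref{w810}) and that of $\tilde K^\hbar_i(q)$ with (\ref{w811}), matching normalisations. Granting these three points, the operator identity (\ref{Fourrel}) descends verbatim to the matrix setting.

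\textbf{Reducing (\ref{b30}) to (\ref{Fourrel}).} The second relation need not be verified independently. Exactly as in the proof of the Corollary, (\ref{b30}) is an algebraic consequence of (\ref{Fourrel}): the substitution $i\leftrightarrow j$, $q_i\leftrightarrow q_j$, $z\leftrightarrow -z$ together with the symmetry (\ref{b28}) and skew-symmetry (\ref{b29}) produces the companion relation (\ref{b31}), and applying $(\partial_{q_i}+\partial_{q_j})$ followed by setting $w=0$ yields (\ref{b30}). Since Baxter's $R$-matrix (\ref{w810}) satisfies (\ref{b28})--(\ref{b29}) and the matrix $K$ (\ref{w811}) is smooth in its arguments, this purely formal manipulation is legitimate at the matrix level once (\ref{Fourrel}) is known. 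Hence the entire content of the Proposition is concentrated in (\ref{Fourrel}).

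\textbf{Main obstacle.} The delicate point is the exact matching of normalisations in the third step: the prefactors $e^{2\pi\imath(z+\hbar+\om_k)\p_\tau\om_k}$ and the half-period shifts $\om_k$ appearing in (\ref{w811}) must be \emph{precisely} those generated by restricting $\tilde K^\hbar_i(q)$ to $\Theta$, with no spurious scalar factor, and one must confirm that $\hat t_i$ preserves $\Theta$ and acts there as the expected involution intertwining the half-period labels. If one prefers to bypass the representation-theoretic apparatus, the alternative is a direct check: expand (\ref{Fourrel}) in the basis $\sigma_{4-k}\otimes\sigma_{4-k}$, whereupon it collapses to a finite collection of theta-function identities of exactly the type (\ref{b55})--(\ref{b56}) that underlie (\ref{hident1}), each verifiable by comparing quasi-periodicities and divisors on the elliptic curve. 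This direct route is elementary but combinatorially heavy, which is why I would present the Felder--Pasquier argument as the primary proof.
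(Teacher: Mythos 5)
Your argument is correct in substance but takes a genuinely different route from the paper. The paper's proof of this Proposition is exactly your ``alternative'': a one-line appeal to direct calculation, i.e.\ expanding (\ref{Fourrel}) in the basis $\sigma_{4-k}\otimes\sigma_{4-k}$ and reducing it to a family of theta-function identities of the type underlying (\ref{hident1}), (\ref{b55})--(\ref{b56}); the computation itself is left to the reader. Your primary route --- descending from the operator identity already proved for (\ref{SUope}) and (\ref{Ktilde}) by restriction to the Felder--Pasquier space $\Theta$ --- is the more conceptual one, and the paper itself gestures at it in the Introduction when it invokes \cite{FP} and \cite{KH2}; it buys the matrix statement for free once the invariance of $\Theta$ and the identification of the restricted operators with (\ref{w810}) and (\ref{w811}) are established, whereas the direct check buys independence from the representation-theoretic setup at the cost of a combinatorially heavier verification. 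One caveat on your first step: the individual ``building blocks'' (multiplication by $\phi(\cdot,x_i\pm x_j)$ or $v(\cdot,x_i)$) do \emph{not} preserve $\Theta$ separately --- they introduce poles --- so you cannot record their matrix forms one by one; only the full combinations $R^\hbar_{ij}$, $\tilde R^\hbar_{ij}$, $\tilde K^\hbar_i$ act on $\Theta$, and their invariance together with the exact normalisations in (\ref{w811}) is precisely the content of \cite{FP} and \cite{KH2} that must be quoted or reproved, as you correctly flag under ``Main obstacle''. Your reduction of (\ref{b30}) to (\ref{Fourrel}) via (\ref{b31}), the symmetries (\ref{b28})--(\ref{b29}), the derivative $(\partial_{q_i}+\partial_{q_j})$ and the substitution $w=0$ mirrors the paper's Corollary argument and transfers verbatim to the matrix case, so concentrating the proof in (\ref{Fourrel}) alone is legitimate.
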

 The proof is by straightforward calculation. One should use the set of identities, which are similar
 to those in (\ref{hident1}).

In fact, all the properties or $R$-matrix and $K$-matrix necessary for the proof of the
$R$-matrix valued Lax equation hold true.
 \begin{predl}
 The Lax equation (\ref{q006}) with the Lax pair (\ref{q502})-(\ref{q512})
 defined through the Baxter's elliptic $R$-matrix (\ref{w810})
 and the $K$-matrix (\ref{w811})
 is equivalent to
 the equations of motion (\ref{q4011}).
 \end{predl}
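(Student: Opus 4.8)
The plan is to exploit the fact that the proof of the Theorem in Section~\ref{sec5} is entirely \emph{representation-independent}: from the first block to the last it invokes only a fixed list of algebraic relations among the building blocks $R^z_{ij}$, $\ti R^z_{ij}$, $\ti K^z_i$ and their $q$-derivatives $F^z_{ij}$, $\ti F^z_{ij}$, $\ti Y^z_i$, together with the manifest $(\text{anti})$symmetry (\ref{Lanti2})--(\ref{Msym2}) of the block ansatz (\ref{q503})--(\ref{q510}). Concretely, the block-$11$ and block-$12$ computations close by means of the symmetry relations (\ref{b28})--(\ref{b29}), the commutativity relations (\ref{comm1})--(\ref{comm4}), the differentiated three-term identities (\ref{q05a}), (\ref{q5131}), (\ref{q5132}), (\ref{q513}), the four-term consequences (\ref{b301}), (\ref{b30}), and the unitarity-type identities (\ref{b273})--(\ref{b276}); the block-$21$ and block-$22$ equations follow from (\ref{Lanti2})--(\ref{Msym2}) alone. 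It therefore suffices to establish that the Baxter $R$-matrix (\ref{w810}) and the elliptic $K$-matrix (\ref{w811}) satisfy exactly this list, after which the block-by-block verification of Section~\ref{sec5} applies verbatim.

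First I would dispose of the relations that do not involve the reflection. By the Felder--Pasquier construction \cite{FP}, the Shibukawa--Ueno operator restricts, on the appropriate finite-dimensional space of theta-functions, to the Baxter--Belavin matrix, and on that subspace every operator identity of Section~\ref{sec4} that contains only $R^z_{ij}$, $\ti R^z_{ij}$ descends to the corresponding matrix identity. This supplies the AYBE (\ref{AYBE}) with its companions (\ref{AYBE1})--(\ref{AYBE3}), the unitarities (\ref{uni})--(\ref{uni2}), the $(\text{anti})$symmetry (\ref{b28})--(\ref{b29}) and the commutativity (\ref{comm1})--(\ref{comm4}) for (\ref{w810}); the AYBE is in any case the known defining property of the Baxter--Belavin $R$-matrix \cite{Pol}. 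The $K$-matrix unitarities (\ref{uniK})--(\ref{unitildeK}) hold for (\ref{w811}) as well: as remarked after (\ref{w811}), they reduce to (\ref{a10}) and the quasi-periodicity of the $\varphi_k$, in agreement with the matrix $K$-operator of \cite{KH2}.

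The genuinely new input is the four-term relation (\ref{Fourrel}) and its companion (\ref{b30}), which mix $R$, $\ti R$ and $\ti K$ through the reflection $\hat t_i$ entering $\ti K$. These cannot be obtained merely by restricting an operator identity, so they must be checked directly for the explicit matrices (\ref{w810}), (\ref{w811}); this is exactly the content of the preceding Proposition, proved by a theta-function computation modelled on (\ref{hident1}). Granting it, the remaining derivative identities (\ref{b273})--(\ref{b276}), (\ref{q05a})--(\ref{q513}) and (\ref{b301}) are produced by differentiating the unitarity, three-term and four-term relations with respect to $q_j$ (respectively $q_i\pm q_j$) exactly as in the Corollaries of Section~\ref{sec4}; those steps are purely formal and use nothing about the particular realization. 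With the complete list now available for the Baxter data, the block-by-block argument of Section~\ref{sec5} reproduces ${\dot\mL}(z)=[\mL(z),\mM(z)]$, which is equivalent to the equations of motion (\ref{q4011}).

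I expect the main obstacle to be precisely the direct verification of (\ref{Fourrel}) and (\ref{b30}) for the $2\times 2$-block data. Whereas the $R$-only relations are standard or descend automatically, these mixed identities require new addition/summation formulas for the functions $\varphi_k(z,\,\cdot+\om_k)$ appearing in (\ref{w811}), with careful bookkeeping of the quasi-periodicity factors $e^{2\pi\imath(\dots)\p_\tau\om_k}$ and of the half-period shifts $\om_k$. Once those theta-function identities are assembled, no further idea is needed and the statement follows.
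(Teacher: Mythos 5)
Your proposal is correct and follows essentially the same route as the paper: the paper likewise reduces the statement to the observation that the block-by-block verification of Section~\ref{sec5} uses only a fixed list of algebraic identities, all of which hold for the Baxter $R$-matrix and the elliptic $K$-matrix, with the four-term relation (\ref{Fourrel}), (\ref{b30}) singled out as the genuinely new input supplied by the preceding Proposition. In fact your write-up is more explicit than the paper's own one-sentence justification, since you enumerate the required relations and explain why each descends to (or is directly verified for) the matrix realization.
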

 Let us write down explicit form for the expression $\mH$ (\ref{q512}):
  \beq\label{q5121}
  \begin{array}{c}
  \displaystyle{
\mH=g\sum_{k<l}^n \Big(F^0_{kl}(q_{k}-q_l)+{\ti F}^0_{kl}(q_{k}+q_l)\Big)
+\frac12\sum\limits_{k=1}^n \tilde{Y}_k^0(q_k)\in{\rm Mat}(2,\mC)^{\otimes n}\,,
 }
 \end{array}
 \eq
 where
  \beq\label{q5122}
  \begin{array}{c}
  \displaystyle{
F^0_{12}(x)={\ti F}^0_{12}(x)=\p_x R_{12}^\hbar(x)\Big|_{\hbar=0}=
  }
  \\ \ \\
  \displaystyle{
=-\frac12\, E_2(x)\sigma_0\otimes\sigma_0+
\frac12\sum\limits_{a=1}^3\vf_a(x,\om_a)\Big(E_1(x+\om_a)-E_1(x)-E_1(\om_a)\Big)
\sigma_{4-a}\otimes\sigma_{4-a}
 }
 \end{array}
 \eq
 and
  \beq\label{q5123}
  \begin{array}{c}
  \displaystyle{
\tilde{Y}^0(x)=\p_x K^\hbar(x)\Big|_{\hbar=0}=
  }
  \\ \ \\
  \displaystyle{
=-\nu_0 E_2(x)\sigma_0-\sum\limits_{a=1}^3\nu_a
\vf_a(x-\om_a,\om_a)\Big(E_1(x+\om_a)-E_1(x)-E_1(\om_a)\Big)\sigma_{4-a}\,.
 }
 \end{array}
 \eq
 %

\section{Concluding remarks: XYZ long-range spin chain of \texorpdfstring{${\rm BC}_n$}{BCn} type}\label{sec7}
\setcounter{equation}{0}

As was explained in the end of Section \ref{sec2} the classical $R$-matrix-valued Lax pair allows us to
describe a quantum long-range spin chains. Consider some equilibrium position
in the Calogero-Inozemtsev model
  \beq\label{q991}
  \begin{array}{c}
  \displaystyle{
p_i=0\,,\quad q_i=\zeta_i\,,\ i=1,...,n
 }
 \end{array}
 \eq
defined by the equations ${\dot p}_i=0$, that is
\begin{equation}
 \label{q992}
   \displaystyle{
g^2
\sum_{k:k\neq i}^{n} \Big(\wp'(\zeta_i-\zeta_k) + \wp'(\zeta_i+\zeta_k)\Big) +
\frac{1}{2}\sum\limits_{a=0}^3\nu_a^2\wp'(\zeta_i+\om_a)=0\,,\quad i=1,...,n\,.
}
 \end{equation}
To the best of our knowledge, these equilibrium positions are unknown in the general case.
At the same time, it is known that $\zeta_i$ coincide with positions of zeros
of solutions to the Heun equation (in the elliptic form) \cite{GW}.
Consider the restriction of $\mL(z)$ and $\mM(z)$ (\ref{q502})-(\ref{q512}) to
the positions (\ref{q991}):
\begin{equation}
 \label{q993}
   \displaystyle{
\mL'(z)=\mL(z)|_{p_j=0,q_j=\zeta_j}\,,\quad \mM'(z)=\mM(z)|_{q_j=\zeta_j}\,,\quad
{\mH^0}'=\mH^0|_{q_j=\zeta_j}\,.
}
 \end{equation}
Then we come to the quantum Lax equation
  \beq\label{q994}
  \begin{array}{c}
  \displaystyle{
 [{\mH^0}',\mL'(z)]=[\mL'(z),\mM'(z)]\,,
 }
 \end{array}
 \eq
where the quantum Hamiltonian ${\mH^0}'$ is as follows:
  \beq\label{q995}
  \begin{array}{c}
  \displaystyle{
{\mH^0}'=g\sum_{k<l}^n \Big(F^0_{kl}(\zeta_k-\zeta_l)+{\ti F}^0_{kl}(\zeta_k+\zeta_l)\Big)
+\frac12\sum\limits_{k=1}^n \tilde{Y}_k^0(\zeta_k)=
 }
 \\
   \displaystyle{
=-\frac{g}2\sum_{k<l}^n
 \Big(E_2(\zeta_k-\zeta_l)+E_2(\zeta_k+\zeta_l)\Big)\stackrel{k}{\sigma}_0\otimes\stackrel{l}{\sigma}_0+
 }
 \\
   \displaystyle{
+
\frac{g}2\sum_{k<l}^n\sum\limits_{a=1}^3\Big[\vf_a(\zeta_k-\zeta_l,\om_a)\Big(E_1(\zeta_k-\zeta_l+\om_a)-E_1(\zeta_k-\zeta_l)-E_1(\om_a)\Big)+
}
\\
\displaystyle{
+\vf_a(\zeta_k+\zeta_l,\om_a)\Big(E_1(\zeta_k+\zeta_l+\om_a)-E_1(\zeta_k+\zeta_l)-E_1(\om_a)\Big)\Big]
\stackrel{k}\sigma_{4-a}\otimes\stackrel{l}\sigma_{4-a}-
 }
 \\
   \displaystyle{
 -\frac12\sum\limits_{k=1}^n\Big[ \nu_0 E_2(\zeta_k)\stackrel{k}\sigma_0+\sum\limits_{a=1}^3\nu_a
\vf_a(\zeta_k-\om_a,\om_a)\Big(E_1(\zeta_k+\om_a)-E_1(\zeta_k)-E_1(\om_a)\Big)\stackrel{k}\sigma_{4-a}\Big]\,.
}
 \end{array}
 \eq
In fact, the existence of the Lax equation is not sufficient for integrability. However, we expect the model to be integrable,
and we hope to prove this conjecture in future works.
The properties of this model will be studied elsewhere.







\subsection*{\ \ \ \ \ \ \underline{Main notations:}}

\ \vspace{-7mm}

{\small{

$n$ -- number of particles in the elliptic Calogero-Moser (\ref{q001}) and the Calogero-Inozemtsev model (\ref{q0401});

$q_i$ -- the $i$-th particle coordinate;

$p_i$ -- the $i$-th particle momenta;

$q_{ij}$ and $q_{ij}^+$ -- short notation for $q_i-q_j$ and $q_i+q_j$, respectively;

$\zeta_i$ -- equilibrium positions, see (\ref{q3402}) and (\ref{q991});

$\nu_a$ for $a=0,1,2,3$ -- complex parameters in the Calogero-Inozemtsev model (\ref{q0401})

$N$ -- rank of ${\rm GL}_N$ $R$-matrix;

$\tau$ -- moduli of elliptic curve $\mC/\Gamma$, ${\rm Im}(\tau)>0$;

$\phi(z,u)$ -- Kronecker elliptic function $\phi(z,u)$ (\ref{a001}), (\ref{a01});

$\wp(z)$ -- the Weiestrass elliptic $\wp$-function;

$v(z,u)$ --elliptic function (\ref{q419}) depending on the parameters $\nu_0,\nu_1,\nu_2,\nu_3$ ;

$\hbar$ -- complex quantum parameter entering the $R$-operator definition;

$z$ -- spectral parameter in Lax matrix, at the same time the quantum parameter in $R$-operators in $R$-matrix Lax pairs;

$R^\hbar(u)$ --- elliptic $R$-operator, satisfying (\ref{q2}), (\ref{q008}), (\ref{q01}) and (\ref{q011}). Depending on the context, it is the Shibukawa-Ueno operator (\ref{SUope}), which acts on the space of scalar functions on $n$-variables, or its finite-dimensional analogue, the Baxter-Belavin elliptic $R$-matrix, acting on $\mC^N\otimes \mC^N$, see (\ref{w810}) for $N=2$;

$R_{ij}^\hbar(u)$ -- $R$-operator, acting nontrivially on variables $x_i$ and $x_j$, or $R$-matrix, i.e. an element of ${\rm End}((\mC^N)^{\otimes n})$ acting on the $i$-th and the $j$-th tensor components of $(\mC^N)^{\otimes n}$;

$\tilde{K}_i^\hbar(u)$ -- the corresponding $K$-operator, satisfying ({\ref{q0014}}) and (\ref{RE}), see (\ref{Ktilde}) and (\ref{w811});

$F_{ij}^\hbar(u)$ -- partial derivative of $R$-operator $\partial_u R_{ij}^\hbar (u)$;

$\tilde{Y}_{i}^\hbar(u)$ -- partial derivative of $K$-operator $\partial_u K_{i}^\hbar (u)$;

$R_{ij}^\hbar$,
$F_{ij}^\hbar$  --short notation for $R_{ij}^\hbar(q_i-q_j)$ and $F_{ij}^\hbar(q_i-q_j)$;

$\tilde K_{ij}^\hbar$,
$\tilde Y_{ij}^\hbar$  --short notation for $\tilde K_{i}^\hbar(q_i)$ and $\tilde Y_{i}^\hbar(q_i)$;

$L(z)$ and $M(z)$ -- Lax pair in the scalar case, see (\ref{q004}) and (\ref{q105}) for the elliptic Calogero-Moser and (\ref{q402}) for the Calogero-Inozemtsev model;

$\mL(z)$ and $\bar\mM(z)$ -- $R$-matrix valued Lax pair, see (\ref{q31})-(\ref{q33}) for the elliptic Calogero-Moser and (\ref{q502}) for the Calogero-Inozemtsev model.

\section{Appendix: elliptic functions}\label{secA}
\def\theequation{A.\arabic{equation}}
\setcounter{equation}{0}

\paragraph{Main definitions and properties.}
We mainly deal with the elliptic Kronecker function \cite{Weil}:
 \beq\label{a01}
  \begin{array}{l}
  \displaystyle{
 \phi(z,u)=\frac{\vth'(0)\vth(z+u)}{\vth(z)\vth(u)}=\phi(u,z)\,,\quad
 \res\limits_{z=0}\phi(z,u)=1\,,
  \quad \phi(-z, -u) = -\phi(z, u)\,,
 }
 \end{array}
 \eq
where $\vth(z)$ is the first Jacobi theta-function:
 \beq\label{a02}
 \begin{array}{c}
  \displaystyle{
\vth(z)=\vth(z,\tau)\equiv-\theta{\left[\begin{array}{c}
1/2\\
1/2
\end{array}
\right]}(z|\, \tau )\,,
 }
 \end{array}
 \eq
\beq\label{a03}
 \begin{array}{c}
  \displaystyle{
\theta{\left[\begin{array}{c}
a\\
b
\end{array}
\right]}(z|\, \tau ) =\sum_{j\in \mZ}
\exp\left(2\pi\imath(j+a)^2\frac\tau2+2\pi\imath(j+a)(z+b)\right)\,,\quad {\rm Im}(\tau)>0\,.
}
 \end{array}
 \eq
Here $\tau$ is the moduli of elliptic curve $\mC/(\mZ+\tau\mZ)$, ${\rm Im}(\tau)>0$.
It is sometimes useful to use the Jacobi theta-functions:
\beq\label{8v1}
\begin{array}{c}
\displaystyle{
\theta_1(u|\tau )=\vth(u,\tau)=-i\sum_{k\in \mZ}
(-1)^k q^{(k+\frac{1}{2})^2}e^{\pi i (2k+1)u},
\qquad
\theta_2(u|\tau )=\sum_{k\in \mZ}
q^{(k+\frac{1}{2})^2}e^{\pi i (2k+1)u},}
\\ \\
\displaystyle{
\theta_3(u|\tau )=\sum_{k\in \mZ}
q^{k^2}e^{2\pi i ku},
\qquad
\theta_4(u|\tau )=\sum_{k\in \mZ}
(-1)^kq^{k^2}e^{2\pi i ku},}
\end{array}
\eq
where
$q=e^{\pi i \tau}$ and
\beq\label{8v11}
\begin{array}{c}
\theta_2(u|\tau)=\theta_1(u+\frac{1}{2}|\tau), \! \quad \! \!
\theta_3(u|\tau)=q^{\frac{1}{4}}e^{\pi iu}\theta_1(u+\frac{\tau+1}{2}|\tau), \! \quad \!\!
\theta_4(u|\tau)=-iq^{\frac{1}{4}}e^{\pi iu}\theta_1(u+\frac{\tau}{2}|\tau).
\end{array}
\eq
The infinite product representation is
\beq\label{8v2}
\begin{array}{l}
\displaystyle{
\theta_1(u|\tau )=2q^{\frac{1}{4}} \sin \pi u \prod_{n\geq 1}
(1-q^{2n})(1-q^{2n}e^{2\pi i u})(1-q^{2n}e^{-2\pi i u}),
}
\\ \\
\displaystyle{
\theta_2(u|\tau )=2q^{\frac{1}{4}} \cos \pi u \prod_{n\geq 1}
(1-q^{2n})(1+q^{2n}e^{2\pi i u})(1+q^{2n}e^{-2\pi i u}),
}
\\ \\
\displaystyle{
\theta_3(u|\tau )= \prod_{n\geq 1}
(1-q^{2n})(1+q^{2n-1}e^{2\pi i u})(1+q^{2n-1}e^{-2\pi i u}),
}
\\ \\
\displaystyle{
\theta_4(u|\tau )= \prod_{n\geq 1}
(1-q^{2n})(1-q^{2n-1}e^{2\pi i u})(1-q^{2n-1}e^{-2\pi i u}).
}
\end{array}
\eq
The partial derivative $f(z,u) = \partial_u \vf(z,u)$ equals
\beq\label{a04}
\begin{array}{c} \displaystyle{
    f(z, u) = \phi(z, u)(E_1(z + u) - E_1(u)), \qquad f(-z, -u) = f(z, u)\,,
}\end{array}\eq
where $E_1(z)$ the first
  Eisenstein function. The first and the second Eisenstein functions are defined as follows:
\beq\label{a05}
\begin{array}{c} \displaystyle{
    E_1(z)=\frac{\vth'(z)}{\vth(z)}=\zeta(z)+\frac{z}{3}\frac{\vth'''(0)}{\vth'(0)}\,,
    \quad
    E_2(z) = - \partial_z E_1(z) = \wp(z) - \frac{\vartheta'''(0) }{3\vartheta'(0)}\,,
}\end{array}\eq
\beq\label{a06}
\begin{array}{c}
 \displaystyle{
    E_1(- z) = -E_1(z)\,, \quad E_2(-z) = E_2(z)\,,
}\end{array}
\eq
where $\wp(z)$ and $\zeta(z)$ are the Weierstrass functions.
Due to the following behavior of $\phi(z,q)$ near $z=0$
  \beq\label{a061}
  \begin{array}{l}
  \displaystyle{
\phi(z,q)=z^{-1}+E_1(q)+z\,(E_1^2(q)-\wp(q))/2+O(z^2)\,.
 }
 \end{array}
 \eq
 we also have
  \beq\label{a062}
  \begin{array}{l}
  \displaystyle{
 f(0,q)=-E_2(q)\,.
 }
 \end{array}
 \eq
 We use the following addition formulae:
\beq\label{a07}
  \begin{array}{c}
  \displaystyle{
  \phi(z_1, u_1) \phi(z_2, u_2) = \phi(z_1, u_1 + u_2) \phi(z_2 - z_1, u_2) + \phi(z_2, u_1 + u_2) \phi(z_1 - z_2, u_1)\,,
 }
 \end{array}
 \eq
and by applying $\p_{u_2}-\p_{u_1}$ we get
\beq\label{a071}
  \begin{array}{c}
  \displaystyle{
  \phi(z_1, u_1) f(z_2, u_2)-f(z_1,u_1)\phi(z_2,u_2) =
  }
  \\ \ \\
  \displaystyle{
  =
  \phi(z_1, u_1 + u_2) f(z_2 - z_1, u_2) - \phi(z_2, u_1 + u_2) f(z_1 - z_2, u_1)\,.
 }
 \end{array}
 \eq
Also,
\beq\label{a08}
  \begin{array}{c}
  \displaystyle{
 \phi(z,u_1)\phi(z,u_2)=\phi(z,u_1+u_2)\Big(E_1(z)+E_1(u_1)+E_1(u_2)-E_1(z+u_1+u_2)\Big)\,,
 }
 \end{array}
 \eq
\beq\label{a09}
  \begin{array}{c}
  \displaystyle{
  \phi(z, u_1) f(z,u_2)-\phi(z, u_2) f(z,u_1)=\phi(z,u_1+u_2)\Big(\wp(u_1)-\wp(u_2)\Big)=
 }
 \\ \ \\
   \displaystyle{
 =\phi(z,u_1+u_2)\Big(E_2(u_1)-E_2(u_2)\Big)=\phi(z,u_1+u_2)\Big(f(0,u_2)-f(0,u_1)\Big)\,.
 }
 \end{array}
 \eq
\beq\label{a10}
  \begin{array}{c}
  \displaystyle{
  \phi(z, u) \phi(z, -u) = \wp(z)-\wp(u)=E_2(z)-E_2(u)\,,
 }
 \end{array}
 \eq
\beq\label{a11}
  \begin{array}{c}
  \displaystyle{
  \phi(z, u) f(z, -u)-\phi(z, -u) f(z, u)=\wp'(u)\,.
 }
 \end{array}
 \eq
Define
 \beq\label{w209}
 \begin{array}{c}
  \displaystyle{
 \om_0=0\,,\quad
 \om_1=\frac{1}{2}\,,\quad
 \om_2=\frac{1+\tau}{2}\,,\quad
 \om_3=\frac{\tau}{2}
  }
 \end{array}
\eq
 and
 \beq\label{w210}
 \begin{array}{c}
  \displaystyle{
 \vf_{0}(z,\hbar)=\phi(z,\hbar)\,,
\qquad
 \vf_{1}(z,\hbar+\om_1)=\phi(z,\om_1+\hbar)\,,
 }
 \\ \ \\
  \displaystyle{
 \vf_{2}(z,\hbar+\om_2)=e^{\pi\imath z}\phi(z,\om_2+\hbar)\,,
\qquad
  \vf_{3}(z,\hbar+\om_3)=e^{\pi\imath z}\phi(z,\om_3+\hbar)\,.
 }
  \end{array}
 \eq
Then
 \beq\label{w211}
 \begin{array}{c}
  \displaystyle{
 \vf_{k}(z,\hbar+\om_k)=\frac{\theta_1'(0)\theta_{k+1}(z+\hbar)}{\theta_{1}(z)\theta_{k+1}(\hbar)}\,,
 \quad k=0,1,2,3
  }
 \end{array}
\eq
 The set of introduced functions is transformed under the action of matrix entering (\ref{q416}) as
  \beq\label{w215}
 \begin{array}{c}
  \left(\begin{array}{c}
 \vf_{0}(2z,u+\om_0)
 \\
  \vf_{1}(2z,u+\om_1)
 \\
  \vf_{2}(2z,u+\om_2)
 \\
  \vf_{3}(2z,u+\om_3)
 \end{array}\right)
 =
   \displaystyle{\frac12}
 \left(\begin{array}{cccc}
 1 & 1 & 1 & 1
 \\
  1 & 1 & -1 & -1
 \\
  1 & -1 & 1 & -1
 \\
  1 & -1 & -1 & 1
 \end{array}\right)
  \left(\begin{array}{c}
 \vf_{0}(2u,z+\om_0)
 \\
  \vf_{1}(2u,z+\om_1)
 \\
  \vf_{2}(2u,z+\om_2)
 \\
  \vf_{3}(2u,z+\om_3)
 \end{array}\right)\,.
 \end{array}
 \eq
  This transformation matrix can be written as
  \beq\label{q417}
  \begin{array}{c}
  \displaystyle{
I_{km}=\frac12\exp\Big(4\pi\imath\Big(\om_{m-1}\p_\tau\om_{k-1}
-\om_{k-1}\p_\tau\om_{m-1}\Big)\Big)\,,\quad k,m=1,...,4\,.
 }
 \end{array}
 \eq
It has the property
  \beq\label{q418}
  \begin{array}{c}
  \displaystyle{
I^{-1}=I\,.
 }
 \end{array}
 \eq
  We also use the widely known identity:
  \beq\label{w216}
 \begin{array}{c}
  \displaystyle{
\sum\limits_{a=0}^3\wp(z+\om_a)=4\wp(2z)\,.
  }
 \end{array}
 \eq
The quasi-periodic properties are as follows:
  \beq\label{a0621}
  \begin{array}{l}
  \displaystyle{
 \phi(z+1,u)=\phi(z,u)\,,\qquad  \phi(z+\tau,u)=e^{-2\pi\imath u}\phi(z,u)
 }
 \end{array}
 \eq
or
  \beq\label{a0622}
  \begin{array}{l}
  \displaystyle{
 \phi(z\pm 2\om_a,u)=e^{\mp4\pi\imath\p_\tau\om_a u}\phi(z,u)\,.
 }
 \end{array}
 \eq
For the theta function (\ref{q02}) we have
  \beq\label{a0623}
  \begin{array}{l}
  \displaystyle{
 \vth(z+2\om_a)=-e^{-4\pi\imath(z+\om_a)\p_\tau\om_a}\vth(z)
 }
 \end{array}
 \eq
or
  \beq\label{a0624}
  \begin{array}{l}
  \displaystyle{
 \vth(z+\om_a)=-e^{-4\pi\imath z\p_\tau\om_a}\vth(z-\om_a)\,.
 }
 \end{array}
 \eq
This provides
  \beq\label{a0625}
  \begin{array}{l}
  \displaystyle{
 E_2(z+2\om_a)=E_2(z)\,,\qquad E_1(z+2\om_a)=E_1(z)-4\pi\imath\p_\tau\om_a
 }
 \end{array}
 \eq
and
  \beq\label{a0626}
  \begin{array}{l}
  \displaystyle{
E_1(\om_a)=-2\pi\imath\p_\tau\om_a\,,\quad a\neq 0\,.
 }
 \end{array}
 \eq

\paragraph{Function $v(z,u)$ and related identities.}
Here we briefly discuss identities related to the function $v(z,u)$ (\ref{q413}).
It follows from (\ref{q419}) and (\ref{q415}) that
\beq\label{q419a}
v(z,u)v(-z,u)=\sum\limits_{a=0}^3\Big(\nu_a^2\wp(u+\om_a)-\bar\nu_a^2\wp(z+\om_a)\Big)=-v(z,u)v(z,-u)\,.
\eq
Let us prove the important identities (\ref{hident1}) and (\ref{hident3}).
The identity (\ref{hident1}) consists of four relations proportional to $\nu_0,...,\nu_3$.
Consider the one proportional to $\nu_0$:
  \beq\label{a931}
  \begin{array}{c}
  \displaystyle{
 \phi(2x,u)\phi(x+y,w-u)+\phi(2x,w)\phi(x-y,u-w)
 +\phi(2y,-u)\phi(x+y,u+w)=\,.
 }
 \\ \ \\
   \displaystyle{
=\phi(2y,w)\phi(x-y,u+w)\,.
 }
 \end{array}
 \eq
Using (\ref{a07}) we have
  \beq\label{a932}
  \begin{array}{c}
  \displaystyle{
\phi(x+y,w)\phi(x-y,u)=\phi(2x,u)\phi(x+y,w-u)-\phi(2x,w)\phi(y-x,w-u)\,.
 }
 \end{array}
 \eq
On the other hand for the same function we also have
  \beq\label{a933}
  \begin{array}{c}
  \displaystyle{
\phi(x+y,w)\phi(x-y,u)=\phi(2y,w)\phi(x-y,u+w)-\phi(2y,-u)\phi(x+y,u+w)\,.
 }
 \end{array}
 \eq
By comparing left hand sides of (\ref{a932}) and (\ref{a933}) one gets (\ref{a931}).
For the rest of the components of (\ref{hident1}) the proof is similar.

In order to prove (\ref{hident3}) let us compute the partial
 derivative $\partial_u+\partial_w$ in (\ref{hident1}):
\beq\label{hident2}
\begin{array}{c}
v'(x,u)\phi(x+y,w-u)+v'(x,w)\phi(x-y,u-w)-
\\ \\
-v'(y,-u)\phi(x+y,u+w)+2v(y,-u)f(x+y,u+w)=
\\ \\
=v'(y,w)\phi(x-y,u+w)+2v(y,w)f(x-y,u+w)
\end{array}
\eq
and then we put $x=0$ and $y:=-z$:
\beq\label{hident2a}
\begin{array}{c}
v'(0,u)\phi(-z,w-u)+v'(0,w)\phi(z,u-w)-v'(-z,-u)\phi(-z,u+w)+
\\ \\
+2v(-z,-u)f(-z,u+w)=v'(-z,w)\phi(z,u+w)+2v(-z,w)f(z,u+w)\,.
\end{array}
\eq
Finally, one should use an obvious antisymmetry property
\beq\label{antiphi}
\phi(z,u)=-\phi(-z,-u), \qquad v(-z,-u)=-v(z,u).
\eq
to get (\ref{hident3}).


\subsection*{Acknowledgments}
We are grateful to O. Chalykh, An. Kirillov, A. Liashyk, V. Prokofev and I. Sechin for useful discussions.

This work was supported by the Russian Science Foundation under grant no. 25-11-00081,\\
https://rscf.ru/en/project/25-11-00081/ and performed at Steklov Mathematical Institute of Russian Academy of Sciences.

No conflict of interest exists for all participating authors. The manuscript has no associated data.





\begin{small}

\end{small}

\end{document}